  \numberwithin{equation}{section}
  \theoremstyle{definition}  
   \newtheorem{defn}{Definition}[section]
   \newtheorem{egs}[defn]{Examples}
   \newtheorem{rmk}[defn]{Remark}
  \theoremstyle{plain}  
   \newtheorem{thm}[defn]{Theorem}
   \newtheorem{lem}[defn]{Lemma}
   \newtheorem{prop}[defn]{Proposition}
  \theoremstyle{remark} 
   \newcommand{\B}[1]{\mathscr{B}({#1})}
   \newcommand{\Bt}[1]{\mathscr{B}_2({#1})}
   \newcommand{\CH}{\mathcal{H}}
   \newcommand{\CK}{\mathcal{K}}
   \newcommand{\BR}{\mathbb{R}}
   \newcommand{\BN}{\mathbb{N}}
\newcommand{\nll}{\centernot{\ll}}
 \newcommand{\numberthis}{\refstepcounter{equation}\tag{\theequation}}  
\newcommand\footnoteref[1]{\protected@xdef\@thefnmark{\ref{#1}}\@footnotemark}
\newcommand{\emailaddress}[1]{\newline{\sf#1}}
\let\tr\relax 
\DeclareMathOperator{\tr}{Tr}
\let\ker\relax 
\DeclareMathOperator{\ker}{Ker}
\DeclareMathOperator{\ran}{Ran}
\DeclareMathOperator{\supp}{Supp}
\DeclareMathOperator{\spn}{span}
\title{Quantum \texorpdfstring{\ensuremath{f}}{}-divergences via  Nussbaum-Szko{\l}a Distributions and Applications to \texorpdfstring{\ensuremath{f}}{}-divergence Inequalities}
\author[1]{George Androulakis}
\author[1,2]{Tiju Cherian John}
\affil[1]{University of South Carolina, Columbia, South Carolina, USA }
\affil[2]{The University of Arizona, Arizona, USA
\emailaddress{giorgis@math.sc.edu, tijucherian@fulbrightmail.org}
}
\begin{document}
\maketitle
\begin{center}
    In memory of K. R. Parthasarathy
\end{center}

 \begin{abstract}
The main result in this article shows that the quantum $f$-divergence of two states  is equal to the classical $f$-divergence of the corresponding Nussbaum-Szko{\l}a distributions. This provides a general framework for studying certain properties of quantum entropic quantities using the corresponding classical entities. The usefulness of the main result is  illustrated by obtaining several quantum $f$-divergence inequalities from their classical counterparts. 
All results presented here are valid in both finite and infinite dimensions and hence can be 
 applied to continuous variable systems as well. A comprehensive review of the instances in the literature where  Nussbaum-Szkoła distributions  are used, is also provided in this article.
\newline \textbf{Keywords:}  Quantum $f$-divergence,  relative entropy, relative modular operator, Nussbaum-Szko{\l}a distributions\\
 \textbf{2020 Mathematics Subject classification:}
 Primary 81P17; Secondary 81P99.
 \end{abstract}
 
 \section{Introduction to Quantum Entropic Quantities}\label{sec:introduction}
Following the pioneering work of  Shannon on information theory, Kullback and Leibler defined  a divergence of two probability measures $P$ and $Q$, using the formula \cite{kullback-leibler-1951}\begin{align*}
    D(P\|Q)=   \begin{cases}
        \sum_{i} P(i)\log \frac{P(i)}{Q(i)}, & \text{if } P\ll Q \text{ with the convention } 0\log \frac{0}{0}=0;\\
        \infty, &\text{otherwise,}
    \end{cases} 
    \end{align*} 
This is now known as the Kullback-Leibler divergence of probability measures. Later on,  R\'enyi \cite{renyi1961}  showed that the  Kullback-Leibler divergence can be extended to a class of relative entropic quantities $D_\alpha$ for $\alpha\in (0,1) \cup (1,\infty)$,  by defining
 \begin{align*}
       D_{\alpha}(P||Q)  = \begin{cases}
           \frac{1}{\alpha-1} \log\sum_i P(i)^{\alpha}Q(i)^{1-\alpha}, &\text{if } \alpha<1, 
            \text{ or }
            P\ll Q ;\\
            \infty, & \text{otherwise,}
        \end{cases}
    \end{align*}
 with the convention  $0\cdot \infty=0$.
Then  $\lim_{\alpha\rightarrow 1} D_{\alpha}(P||Q) = D(P||Q).$ In the same article \cite[Page 561]{renyi1961}
  R\'enyi introduced another quantity which is now  known as $f$-divergence. This has been studied extensively in the early years in \cite{Morimoto1963, csiszar1964, Ali-Silvey-1966}. One may refer to  \cite{Liese-Vajda-2006, Csiszar-Sheilds-2004}  for modern treatments. Both Kullback-Leibler and Rényi $\alpha$-relative entropy, as well as    many other important relative entropic quantities such as Hellinger, $\chi^2$, and total variation distance  are particular cases of $f$-divergences.

  Quantizing these developments, Umegaki  extended the notion of Kullback-Leibler divergence to the setting of von Neumann algebras \cite{umegaki-1962}, and Petz  defined general quasi-entropies for quantum systems, (which includes quantum $f$-divergences) \cite{petz1986quasientropy}. 
 
 Quantum $f$-divergences have specific operational meaning especially because they satisfy the data processing inequality \cite{Hiai-2018}. Specific cases of quantum $f$-divergences like Petz-R\'enyi $\alpha$-relative entropy are useful in quantum hypothesis testing and related areas \cite{Nussbaum-Szkola-2009, Mosonyi-2009, Jaksic-Ogata-Pillet-2012,  Ses-Lam-Wil-2018, Berta-Scholz-Tomamichel-2018, Hiai-2018}. One may refer to \cite{petz-hiai-1991} and the other articles cited above to know more about different uses of relative entropy. Similar to the classical setting,  a number of quantum entropic quantities arise from quantum $f$-divergences. Thus quantum $f$-divergence is a parent quantity for several useful relative entropic quantities. Some examples are pictorially represented below.

\vspace{1cm}
\Tree[.{\textbf{Quantum} $f$-\textbf{divergence}}    [.{Umegaki RE}  ] [.{Petz-R\'enyi $\alpha$-RE}   ] [.{{Hellinger}} ] [.{$\chi^2${}} ] [.{Total Variation} ]] 
\vspace{1cm}

 The article \cite{Nussbaum-Szkola-2009} cited above is particularly important in our context. In that article, Nussbaum and Szkoła introduced classical probability distributions, (now called Nussbaum-Szkoła distributions), corresponding to any two  finite dimensional density matrices. They also proved that the Petz-Rényi $\alpha$-relative entropy of two quantum states is equal to the Rényi $\alpha$-relative entropy of the corresponding Nussbaum-Szkoła distributions. Furthermore, these distributions were extensively used in the literature to study various properties of Petz-Rényi $\alpha$-relative entropy.

In the context of our work in this article, most important related works were produced by Hiai and Mosonyi \cite{Hiai-Mosonyi-2017}, Hiai \cite{Hiai-2018}, and Berta, Scholz and Tomamichel \cite{Berta-Scholz-Tomamichel-2018}  where quantum $f$-divergences are studied in general or the more specific case of Petz-R\'enyi relative entropy is studied. Also, Seshadreeshan, Lami and Wilde in \cite{Ses-Lam-Wil-2018} and Parthasarathy in \cite{Par2021a} studied the Petz-R\'enyi relative entropy.  

The structure of this article is as follows. In Section \ref{sec:review}, we provide a comprehensive review of the literature on the use of  Nussbaum-Szkoła distributions. In Section \ref{sec:f-divergence}, we  prove that the quantum $f$-divergence of two states on a finite or infinite dimensional Hilbert space is equal to the corresponding classical $f$-divergence of the Nussbaum-Szkoła distributions (see Theorem \ref{thm:f-divergence}). This is a far reaching generalization of the result by Nussbaum and Szkoła that we described above.  In Section \ref{sec:quant-f-div-inequalities} we show that several inequalities between various quantum relative entropic quantities follow immediately from their classical counterparts, because of our Theorem \ref{thm:f-divergence}. We collect all the necessary background materials in the Appendix. 
    
\section{A Review of the Literature on Nussbaum-Szkoła Distributions}  \label{sec:review}

In this section, we present a review of the use of Nussbaum-Szkoła distributions in the literature. 
These
distributions were originally introduced in order to study the error exponents in quantum hypothesis testing problems. We do not present a general review of the quantum hypothesis testing problem which can be found in \cite{Spehner2014-oh, Bae2015-hb}. Also, we do not present a general review of $f$-divergences which can be found in the following references \cite{Hiai2011-qb, Hiai2017-sg, Hiai-Mosonyi-2017, Hiai-2018, Hiai2019-xn, hiai2021quantum}.

In binary hypothesis testing, one is presented with a null hypothesis $H_0$ and an alternative hypothesis $H_1$ with priors $\pi_0$ and $\pi_1$, respectively. There are two types of errors associated with the  distinguishability of $H_0$ and $H_1$: 
\begin{quote}
    Type I error or ``false positive'' or ``false alarm'' is when one decides $H_1$ while $H_0$ is the reality, and \\
    Type II error or ``false negative'' or ``missed detection'' is when one decides $H_0$ while $H_1$ is the correct hypothesis.
\end{quote}
 
Nussbaum and Szkoła \cite{Nussbaum-Szkola-2009} study the symmetric binary quantum hypothesis testing. In this case, Type I and Type II errors are treated equally and one is interested in minimizing the total probability of error.  Given two quantum states $\rho$ (null hypothesis) and $\sigma$ (alternate hypothesis) with priors  $\pi_0$ and $\pi_1$, respectively, one considers $n$ many i.i.d. copies $\rho^{\otimes n}$ and $\sigma^{\otimes n}$ and finds a Positive Operator Valued Measurement (POVM), $\{1-\Pi_n, \Pi_n\}$  which discriminates the states $\rho^{\otimes n}$ and $\sigma^{\otimes n}$. If  $1-\Pi_n$ is the result of the   measurement, then we decide that the given state is $\rho^{\otimes n}$, and if $\Pi_n$ is the result of the   measurement, then we decide that the given state is $\sigma^{\otimes n}$. The integer $n$ represents the number of repetitions of the experiment and we consider the sequence of measurements $\Pi = \{\Pi_n\}_{n\in \BN}$. The probability of Type I and Type II errors when using $n$ copies of the states are given by \begin{equation}\label{eq:error-types}
    \alpha_n(\Pi) := \tr \left[\Pi_n\rho^{\otimes n}\right] \quad \text{and}  \quad \beta_n(\Pi):=\tr \left[\left(1-\Pi_n\right) \sigma^{\otimes n}\right],
\end{equation} respectively.
The Bayesian error probability is given by 
\begin{align}\label{eq:error-pi-n}
    \operatorname{Err}(\Pi_n) &= \pi_0 \alpha_n(\Pi)+ \pi_1\beta_n(\Pi) \nonumber\\
    &= \pi_1 -\operatorname{Tr}\left[\Pi_n\left(\pi_1\sigma^{\otimes n}-\pi_0\rho^{\otimes n}\right)\right].
\end{align} It is known that the optimal measurement (which minimizes the error probability in \eqref{eq:error-pi-n}) is given by the Holevo-Helstrom projection \cite{Holevo-1978, helstrom-1976},
\[\Pi_n^* = \supp (\pi_1\sigma^{\otimes n}-\pi_0\rho^{\otimes n})^+,\]
where \lq\lq$\supp$\rq\rq\ denotes the support projection and  the superscript \lq\lq$+$\rq\rq\ denotes the positive part, i.e., \[P_{e, \min, n} = \operatorname{Err}(\Pi_n^*).\] 
Hence, by using the facts $\left(\supp{A^+}\right)A = A^+$ and $A^+ = \frac{A+\abs{A}}{2}$, the minimum of the Equation~\eqref{eq:error-pi-n})  is equal to 
\[ P_{e, \min, n} =  \frac{1}{2}\left(1-\norm{\pi_1\sigma^{\otimes n}-\pi_0\rho^{\otimes n}}_1\right) \]
where $\norm{\cdot}_1$ denotes the trace class norm.
The main result in \cite{Nussbaum-Szkola-2009} is the \textbf{optimality} part of the error exponent given by the inequality 
 
\begin{align}\label{eq:n-s-chernoff}
   \liminf _{n \rightarrow \infty} \frac{-1}{n} \log P_{e, \min, n} \leq -\inf _{0 \leq s \leq 1} \log \operatorname{Tr}\left[\rho^{1-s} \sigma^s\right],
\end{align}
where the minus signs ensure that the quantities we compare are positive.
In order to obtain this bound, Nussbaum and Szkoła introduce two classical probability distributions $P$ and $Q$ corresponding to the given states $\rho$ and $\sigma$ and use tools of classical estimation theory. These distributions $P$ and $Q$ are now known as the Nussbaum-Szkoła distributions. A formal definition of the Nussbaum-Szkoła distributions can be seen in our Definition~\ref{defn:nussbaum-szkola}. In \cite{Audenaert2007-gl}, Audenaert et al. prove the opposite inequality of \eqref{eq:n-s-chernoff} (\textbf{achievability}) and hence we have an equality there. The quantity on the right hand side of \eqref{eq:n-s-chernoff} is called the \textbf{quantum Chernoff bound}. Thus the optimal error exponent in the symmetric quantum hypothesis testing is obtained by combining the results in the articles \cite{Nussbaum-Szkola-2009, Audenaert2007-gl} and it is given by 
\begin{align}\label{eq:chernoff}
   C(\rho, \sigma) := \liminf _{n \rightarrow \infty} \frac{-1}{n} \log P_{e, \min, n} = -\inf _{0 \leq s \leq 1} \log \operatorname{Tr}\left[\rho^{1-s} \sigma^s\right].
\end{align}
This concludes the complete solution to the symmetric version of the binary quantum hypothesis testing problem that was originally studied by  Ogawa and Hayashi \cite{Ogawa-Hayashi-2004}. This result has been extended from states on finite dimensional Hilbert space to states on infinite spin chains satisfying certain factorization properties by Hiai, Mosonyi and Ogawa in \cite{Hiai-mosonyi-ogawa-2007}. Furthermore, Nussbaum and Szkoła \cite{Nussbaum2010-jz} consider the problem of discriminating multiple quantum states of an infinite spin chain and obtain an asymptotic bound on the error probability in increasing block size. Also,  Nussbaum and Szkoła \cite{Nussbaum2011-uo} investigate the problem of quantum hypothesis testing for multiple states on a finite dimensional Hilbert space that have pairwise disjoint supports. They conjecture that the asymptotic error exponent $\xi$  is given by 
the minimum over all pairwise Chernoff bounds between any two states in the collection, i.e, 
\[C\left(\rho_1, \ldots, \rho_r\right):=\min _{(i, j): i \neq j} C\left(\rho_i, \rho_j\right),\] known as the \textbf{multiple quantum Chernoff bound} and prove that 
$\frac{1}{3}C\left(\rho_1, \ldots, \rho_r\right)\leq \xi\leq C\left(\rho_1, \ldots, \rho_r\right)$. Audenaert and Mosonyi \cite{audenaert-mosonyi-2014} study upper bounds on the error exponents  and also prove that  $\frac{1}{2}C\left(\rho_1, \ldots, \rho_r\right)\leq \xi\leq C\left(\rho_1, \ldots, \rho_r\right)$. Finally, Li \cite{Li2016-tn} proves that \[\xi = C(\rho_1, \rho_2,\dots, \rho_r),\] which settles the conjecture of Nussbaum and Szkoła. It may be noted that
 Nussbaum-Szkoła distributions are handy in the computations leading to  the optimality of the error exponent in the problem of 
multiple quantum state discrimination which is studied in the above references.

In the asymmetric case, the Type I and Type II errors are not treated equally. For example, in the detection of COVID-19 (or most medical diagnoses), $H_0$ represents an undesired hypothesis and we want to avoid the Type II error (false negative) as much as possible. Let the minimum of the Type II errors for all possible measurements for which the corresponding Type  I error is bounded above by some $0<\epsilon<1$ be denoted by $\beta_n^*(\epsilon)$. The error rate, $\beta_R(\epsilon)$ of the Type II error is defined as \[\beta_R(\epsilon)=\lim_{n \to \infty}-\frac{1}{n}\log \beta_n^*(\epsilon).\] The \textbf{quantum Stein's Lemma} that was first proved by Hiai and Petz \cite{petz-hiai-1991} states that $\beta_R(\epsilon)$ is independent of $\epsilon$ and it is equal to $D(\rho||\sigma)$, where $D(\rho|| \sigma)$ denotes the Umegaki relative entropy of  state $\rho$ from state $\sigma$. 

Another important error bound we consider in the asymmetric quantum hypothesis testing is the \textbf{quantum Hoeffding bound}. The rates of Type I and Type II errors are defined respectively as \begin{equation}
    \alpha_R(\Pi) = \lim_{n\to\infty} -\frac{1}{n}\log \alpha_n(\Pi) \quad \text{and} \quad \beta_R(\Pi) = \lim_{n\to\infty} -\frac{1}{n} \log \beta_n(\Pi), 
\end{equation} where $\alpha_n(\Pi)$ and $\beta_n(\Pi)$ have been defined  in \eqref{eq:error-types}.
The quantum  Hoeffding bound for discriminating  two states $\rho$ and $\sigma$ relevant to a positive parameter $r$ is defined to be the 
supremum of the rate $\alpha_R(\Pi)$ for all POVM's $\Pi$ that satisfy $\beta_R(\Pi) \geq r$. 
An expression for the quantum Hoeffding bound has been proposed by  Ogawa and Hayashi \cite{Ogawa-Hayashi-2004}. It is verified by Hayashi \cite{Hayashi2007-cp} that indeed the quantum Hoeffding bound is bounded below by the proposed expression, (achievability). It is verified by Nagaoka \cite{Nagaoka2006-ci} that the quantum Hoeffding bound is bounded above by the 
same proposed expression (optimality). Thus the quantum Hoeffding bound is equal to the proposed expression of Ogawa and Hayashi.
This equality is now known as the quantum Hoeffding bound Theorem. All these ideas are combined and generalized by  Audenaert, Nussbaum, Szkoła and Verstraete in \cite{Audenaert-et-al-2008}. Furthermore, Audenaert, Mosonyi and Verstraete \cite{Audenaert2012-mo} compute the Chernoff, Hoeffding, and Stein bounds as well as the mixed error probabilities related to Chernoff and Hoeffding bounds of finite sample size. Jakšić, Ogata, Pillet and Seiringer \cite{Jaksic-Ogata-Pillet-2012} consider the binary quantum hypothesis problem on general von Neumann algebras and they find that the optimal error exponents are given by similar formulas even in this general setting.
In all of these proofs given in \cite{Nussbaum-Szkola-2009},  \cite{Nagaoka2006-ci}, \cite{Audenaert-et-al-2008}, and \cite{Audenaert2012-mo} the optimality part of the proof of the corresponding bound (Chernoff bound in the case of \cite{Nussbaum-Szkola-2009}, Hoeffding bound in the case of \cite{Nagaoka2006-ci, Audenaert-et-al-2008}, and the mixed error probabilities related to Chernoff and Hoeffding bounds  in the case of \cite{Audenaert2012-mo}) is obtained by making use of the Nussbaum-Szkoła distributions. 

Mosonyi \cite{Mosonyi-2009} defines Nussbaum-Szkoła distributions in the continuous variable setting and uses them as a tool in solving the problem of discriminating gaussian states that have shift invariant and gauge invariant quasifree parts. The hypothesis testing problem of discriminating general gaussian states is stated in the article as an open question.  This is the first instance of direct use of Nussbaum-Szkoła distributions in the infinite dimensional setting. Other recent instances where these distributions are used in the infinite dimensional setting can be seen in the articles \cite{androulakis-john-2022b, Androulakis-John-2023} by the present authors. The  definition of Nussubaum-Szkoła distributions given by Mosonyi agree with Definition \ref{defn:nussbaum-szkola}  in this article, but unlike Mosonyi, we do not restrict ourselves to gaussian states.

Hiai, Mosonyi, Petz and Bény \cite{Hiai2011-qb, Hiai2017-sg} use Nussbaum-Szkola distributions for the first time to study general $f$-divergences between states of finite dimensional $C^*$-algebras. Since they work in finite dimensions
they take a simplified approach in the study of the relative modular operator (see \cite[last line of page 695]{Hiai2011-qb}). Since in the present article, we work with states on finite and infinite dimensional Hilbert spaces, our detailed approach 
to the study of the relative modular operator pays off. Lemma 2.9 in \cite{Hiai2011-qb} is the finite dimensional version
of the main result of our article (Theorem~\ref{thm:f-divergence}). 
The flexibility that we obtain by considering finite and infinite dimensions in our main result is crucial for the applications that we derive  in \cite{androulakis-john-2022b, Androulakis-John-2023}. 

Tomamichel and Hayashi \cite{Tomamichel-Hayashi-2013} use the methods of one-shot entropy, (in particular, Renner's smooth min-entropy \cite{renner-2008}), and the information spectrum in order to compute 
tight second order asymptotics of several entropic quantities.  These entropic quantities are 
important in data compression  and randomness extraction. The Nussbaum-Szkoła distributions play an important role in \cite{Tomamichel-Hayashi-2013}. 
Furthermore, Datta, Mosonyi, Hsieh and Brandão \cite{Datta2013-yg} use smooth entropy approach and obtain second order asymptotics for quantum hypothesis
testing.
Li \cite{Li2014-xg} independently of \cite{Tomamichel-Hayashi-2013}, also uses Nussbaum-Szkoła distributions in order to obtain second order asymptotics for 
entropic quantities which are important in quantum hypothesis testing. Moreover, Nussbaum-Szkoła distributions are useful in the work of Datta, Pautrat, and  Rouzé \cite{Datta2016-lb} that extends the results of 
\cite{Tomamichel-Hayashi-2013, Li2014-xg} to the non-i.i.d. case, (this includes Gibbs states of quantum spin systems and quasi-free states of fermionic lattice gases).

Recall that the classical capacity of a quantum channel is defined as the maximum number of bits that can be transmitted per use of the 
quantum channel in such a way that the decoding error approaches zero as the codeword length $n$ tends to infinity. This classical capacity is described by the Coding Theorems of Holevo \cite{holevo1973bounds,Holevo-1998} and Schumacher-Westmoreland \cite{Schumacher-Westmoreland-1997}, (now known as the HSW Theorem).
A natural question is whether one is able to reliably transmit  classical information via the use of a quantum channel
at a rate asymptotically close to its classical capacity. The Nussbaum-Szkoła distributions have also been 
used in answering this question.  More precisely, consider the problem of transmitting classical information 
via a quantum channel at the rate 
$\text{exp}\left(\Theta(n^{-t})\right)$ bits per channel use close to its capacity, for $t \in [0, 1/2]$. This problem
can be broken into three cases of interest.
The cases $t=0$, $t \in (0,1/2)$ and $t=1/2$ are respectively called \lq\lq large deviation\rq\rq\,, \lq\lq moderate deviation\rq\rq\, and \lq\lq small deviation\rq\rq\   regimes. Two prominent methods have been employed in solving this
question. The first method is by obtaining second order asymptotics for the quantity $\log M^*(W^n,\epsilon)$,
where $M^*(W^n,\epsilon)$ denotes the maximum number of symbols that can be transmitted via $n$ many uses 
of the quantum channel $W$, such that the average probability of error is less than $\epsilon$. The second method 
is by proving tight sphere packing bounds. Hayashi \cite{Hayashi2007-cp} and Dalai \cite{Dalai2013-yl} study the large
deviation regime. Moreover, Dalai \cite{Dalai2013-yl} obtains his result by proving sphere packing
bounds for classical-quantum channels while following a method of Shannon, Gallager, and Berlekamp \cite{Shannon1967-bm}.
 The passage from the classical case of \cite{Shannon1967-bm} to the classical-quantum channels  is achieved via the 
 Nussbaum-Szkoła distributions. 
A related work which uses Dalai's ideas as well as the Nussbaum-Szkoła distributions is done by  Chung, Guha and Zheng \cite{Chung2016-tw} to study superadditivity of quantum channel coding rate. The results of \cite{Dalai2013-yl} have been improved by Cheng, Hsieh and Tomamichel \cite{Cheng2019-hs}.
Chubb, Tan and Tomamichel \cite{Chubb2017-zx} study the moderate deviations regime using the method of the second order asymptotics.
Cheng and Hsieh \cite{Cheng2018-yg} also study the moderate deviations regime, by using the method of sphere packing. Tomamichel and Tan 
\cite{Tomamichel2015-yp} investigate the small deviation regime for certain classical-quantum channels
(e.g.\ image-additive, memoryless channels), using the method of second order asymptotics.

Another instance where Nussbaum-Szkoła distributions have been used in the literature is for coherence distillation. More precisely, Hayashi, Fang and Wang \cite{Hayashi2021-kr} use Nussbaum-Szkoła distributions in order to derive relationships
between entropic quantities of tripartite quantum systems.

The Nussbaum-Szkoła distributions have also been used in the literature to derive useful inequalities between quantum divergences
 from the corresponding inequalities about classical divergences. A particular instance of this situation
can be viewed in the work of Dupuis and Fawzi \cite[Lemma IV.1]{Dupuis2019-vu},  which is further used to improve the second order term in the Entropy Accumulation Theorem. In Section~\ref{sec:quant-f-div-inequalities} of our article we provide many similar examples.

\section{Quantum \texorpdfstring{\ensuremath{f}}{}-divergences via Nussbaum-Szko{\l}a Distributions}\label{sec:f-divergence} 
By a state on a Hilbert space we mean a positive trace class operator with unit trace.  Let $\rho$ and $\sigma$ be any two states on a Hilbert space $\CK$.  The relative modular operator $\Delta_{\rho,\sigma}$ with respect to the states $\rho$ and $\sigma$ was introduced by Araki in \cite{araki1976relative} and \cite{araki1977relative}. It is (in general) an unbounded positive selfadjoint operator defined on a dense subspace of the Hilbert space $\mathcal{B}_2(\mathcal{K})$ of Hilbert-Schmidt operators on $\mathcal{K}$, where the scalar product on $\mathcal{B}_2(\mathcal{K})$ is denoted as $\braket{\cdot}{\cdot}_2$.
 Let $\xi^{\Delta_{\rho,\sigma}}$ denote the spectral measure associated with $\Delta_{\rho,\sigma}$. A detailed analysis of the relative modular operator in our setting, and its spectral decomposition is provided in Appendix \ref{appendix:relative-modular}. Keeping these notations we define the quantum $f$-divergence of $\rho,\sigma$ in Definition \ref{defn:f-divergence}. Our definition of quantum $f$-divergences is same as that of Hiai in \cite[Definition 2.1]{Hiai-2018} adapted to the specific von~Neumann algebra $\B{\CH}$. 

Before defining the $f$-divergence, we need to fix a few notations and conventions related to a convex (or concave) function 
 $f:(0,\infty)\to \BR$. First of all let
 \begin{align}\label{eq:convex-f-notations-1}
     f(0) := \lim_{t\downarrow 0}f(t),&\quad f'(\infty):= \lim_{t\rightarrow\infty}\frac{f(t)}{t}.
 \end{align}
 Moreover, we use the conventions,
 \begin{align}\label{eq:convex-f-notations-2}
   \begin{split}
      & 0f\left(\frac{0}{0}\right) = 0, \\
    & 0f\left(\frac{a}{0}\right) :=  \lim\limits_{t\rightarrow 0}tf\left(\frac{a}{t}\right) = a\lim\limits_{s\rightarrow \infty}\frac{f(s)}{s}=af'(\infty),  \text{ for } a>0,\\
     & 0\cdot (\pm\infty)= 0
   \end{split}
 \end{align} 
 along with 
 $a \cdot (\pm\infty) = \pm\infty$, for $a >0$.
 In what follows,  the notation ``$\int\limits_{0^+}^\infty $'' is used to denote an integral over the open interval $(0,\infty)\subseteq \BR$. For a state $\eta$, let $\Pi_{\eta}$ denote the orthogonal projection onto the support of $\eta$ and $\Pi_{\eta}^\perp$, the projection onto $\ker \eta$.
 
 Now we define the $f$-divergence of two states $\rho$ and $\sigma$ as in \cite[Definition 2.1]{Hiai-2018}. A motivation for this definition can be seen in \cite[Equations 3.9 and 3.12]{Hiai-Mosonyi-2017}.
 \begin{defn}\label{defn:f-divergence}
 Let $\rho$ and $\sigma$ be states on a Hilbert space ${\CK}$. If $f:(0,\infty)\to\BR$ is a convex (or concave) function then the $f$-divergence $D_f(\rho||\sigma)$ of $\rho$ from $\sigma$ is defined as 
 \begin{align}\label{eq:f-divergence}
     D_f(\rho||\sigma)= \int_{0^+}^{\infty}f(\lambda)\mel{\sqrt{\sigma}}{\xi^{\Delta_{\rho,\sigma}}(\dd \lambda)}{\sqrt{\sigma}}_2+f(0)\tr \left(\sigma\Pi_\rho^\perp\right)+f'(\infty)\tr\left(\rho\Pi_\sigma^\perp\right).
 \end{align}
 \end{defn}
The reader is warned that we use the same notation $D_f$ for both classical and quantum $f$-divergence. It will be clear from the context whether we use the quantum or classical divergence each time.
\begin{rmk}\label{rmk:f-divergence}
It is known that, if $f$ is convex then $D_f(\rho||\sigma)$ is well defined and takes value in $(-\infty,\infty]$ for every state $\rho$ and $\sigma$. A proof of this fact can be seen in \cite[Lemma 2.1]{Hiai-2018}. In particular, we have
\begin{align}\label{eq:f-divergence-integral-finite-convex}
\int_{0^+}^{\infty}f(\lambda)\mel{\sqrt{\sigma}}{\xi^{\Delta_{\rho,\sigma}}(\dd \lambda)}{\sqrt{\sigma}}_2>-\infty, \quad \left(\text{when }f \text{ is convex}\right), 
\end{align} since the second term and the third term in \eqref{eq:f-divergence} are strictly bigger than $-\infty$.
If $f$ is concave, then $-f$ is convex and we get \begin{align}\label{eq:f-divergence-integral-finite-concave}
    \int_{0^+}^{\infty}f(\lambda)\mel{\sqrt{\sigma}}{\xi^{\Delta_{\rho,\sigma}}(\dd \lambda)}{\sqrt{\sigma}}_2<\infty, \quad \left(\text{when }f \text{ is concave}\right).
\end{align}
Furthermore, if $f$ is convex, then  neither $f(0)$ nor $f'(\infty)$ can be equal to $-\infty$. Similarly, if $f$ is concave, then neither  $f(0)$ nor $f'(\infty)$ can be equal to $\infty$. This, along with Equations~\eqref{eq:f-divergence-integral-finite-convex} and \eqref{eq:f-divergence-integral-finite-concave}  ensures that 
if $f$ is convex or concave, then $D_f(\rho||\sigma)$ is well defined. Also, notice that without the assumption of convexity (concavity), it is possible that $f(0)$ and $f'(\infty)$ can be infinities with opposite sign, in which case the last two terms in the Equation~\eqref{eq:f-divergence} cannot be added. 
\end{rmk}
\begin{egs}\label{eg:f-div} We can create various examples of  relative entropic quantities by taking different functions $f$ in the definition of $f$-divergence.  The following are important examples motivated from the corresponding classical counterparts \cite{sason-verdu-2016}.
    \begin{enumerate}
    \item $f(t)=t \log t \Rightarrow D(\rho\|\sigma):=D_f(\rho \| \sigma)$, the \textbf{Umegaki Relative Entropy}. 
    \item For $\alpha\in (0,1)\cup (1,\infty)$, $f_\alpha(t)=t^\alpha $ we obtain   $ D_{\alpha}(\rho||\sigma) := \frac{1}{\alpha-1} \log D_{f_\alpha}(\rho\|\sigma)$, the  \textbf{Petz-R\'enyi} $\bm{\alpha}$-\textbf{relative entropy}. 
    \item For $\alpha\in (0,1)\cup (1,\infty)$, $f_{\alpha}(t)=\frac{t^\alpha-1}{\alpha-1} 
$ we obtain   $ \mathscr{H}_{\alpha}(\rho||\sigma)=D_f(\rho \| \sigma)$, the \textbf{Quantum Hellinger $\alpha$-divergence}. 
\item  $f(t)=|t-1| \Rightarrow V(\rho||\sigma):= D_f(\rho \| \sigma)$, the \textbf{Quantum Total Variation}. It may be noted that, in the literature the terminology quantum total variation  is usually reserved for the quantity $\frac{1}{2}\norm{\rho-\sigma}_1$, where $\norm{\cdot}_1$ denotes the trace distance. In this article,  we deviate from this terminology because using the same function $f$ the corresponding classical $f$-divergence is  called the total variation. Moreover, we provide applications for this quantity in Section \ref{sec:quant-f-div-inequalities} in equations \eqref{item:1.1}, \eqref{item:5.1}, \eqref{item:7.1}, \eqref{item:8.1}, \eqref{item:8.2}, and \eqref{item:10.1}, where this quantity appears naturally in several interesting bounds among different quantum $f$-divergences.
\item  $f(t)=(t-1)^2 \Rightarrow \chi^2(\rho||\sigma):= D_f(\rho \| \sigma)$, the \textbf{Quantum }$\bm{\chi^2}$-\textbf{divergence}. 
\end{enumerate}
\end{egs}
Now we define the Nussbaum-Szko{\l}a distributions associated with the states $\rho$ and $\sigma$, (see \cite{Nussbaum-Szkola-2009} for the original definition in the finite dimensional setting, and \cite{Mosonyi-2009} for an infinite dimensional version in the setting of gaussian states). Our definition is valid in both finite and infinite dimensions.
 \begin{defn}\label{defn:nussbaum-szkola}
  (\textit{Nussbaum-Szko{\l}a distributions.})
  Let $\CK$ be a complex Hilbert space with  $\dim \CK = \abs{\mathcal{I}}$, where $\mathcal{I} =\{1,2,\dots,n\}$ for some $n\in \BN$, or $\mathcal{I}=\BN$. Let $\rho$ and $\sigma$ be states on $\CK$  with spectral decomposition \begin{align}\label{eq:spectral-rho-sigma}\begin{split}
    \rho &= \sum_{i\in \mathcal{I}}r_i \ketbra{u_i}, \quad r_i\geq 0,\quad \sum_{i\in \mathcal{I}} r_i = 1,\quad \{u_i\}_{i\in \mathcal{I}} \text{ is an orthonormal basis of }  \mathcal{K};\\
    \sigma &= \sum_{j\in \mathcal{I}}s_j \ketbra{v_j}, \quad s_j\geq0, \quad \sum_{j\in \mathcal{I}} s_j = 1, \quad \{v_j\}_{j\in \mathcal{I}}\text{ is an orthonormal basis of } \mathcal{K}.
    \end{split}
\end{align}
 Define the Nussbaum-Szko{\l}a distribution $P$ and $Q$ associated with $\rho$ and $\sigma$ on $\mathcal{I}\times \mathcal{I}$ by,
\begin{align}
    \label{eq:P-and-Q}\begin{split}
  P(i,j) &= r_i\abs{\braket{u_i}{v_j}}^2,\\
  Q(i,j)& = s_j\abs{\braket{u_i}{v_j}}^2, \quad \forall (i,j)\in \mathcal{I}\times\mathcal{I}.
     \end{split}
\end{align}
 \end{defn}
  \begin{rmk}Observe first that
 \[\sum_{i,j}P(i,j) = \sum_{i}r_i\sum_j\abs{\braket{u_i}{v_j}}^2 = \sum_{i}r_i = 1,\]
  since $\{u_i\}_{i}$ and $\{v_j\}_{j}$ are orthonormal bases. Similarly, $\sum_{i,j}Q(i,j) = 1$. Hence $P$ and $Q$ are probability distributions on $\mathcal{I}\times \mathcal{I}$.  
 \end{rmk}
 For any two probability distributions and for any
 convex (or concave) function $f$ on $(0, \infty)$ the $f$-divergence of one distribution from the other is defined
 as in Definition~\ref{defn:classical-f-divergence} and the discrete special case is described in Remark \ref{defn:renyi-divergence-Kullback}.
 Under the same notations and conventions as in \eqref{eq:convex-f-notations-1} and \eqref{eq:convex-f-notations-2} the classical $f$-divergence  $D_f(P||Q)$ of $P$ from $Q$ for a convex (or concave) $f$ on $(0,\infty)$ is 
 \begin{equation}\label{eq:f-divergence-Nussbaum-Szkola}
  D_f(P||Q) = \sum_{i,j\in \mathcal{I}}f\left(\frac{P(i,j)}{Q(i,j)}\right)Q(i,j).
 \end{equation}
 It is known that the sum of the series defining $D_f(P||Q)$ as above is independent of any rearrangement of the series but we provide a proof of this fact in Remark \ref{defn:renyi-divergence-Kullback} in the Appendix.  We refer to \cite{Liese-Vajda-2006,Csiszar-Sheilds-2004} for various properties of classical $f$-divergences. 
 In the next lemma, we compute a useful formula for the classical $f$-divergence of the Nussbaum-Szko\l a distributions.

\begin{lem}\label{lem:f-divergence-Nussbaum-Szkola-formula} The $f$-divergence of the Nussbaum-Szko{\l}a distributions can be computed as
 \begin{equation}\label{eq:f-divergence-Nussbaum-Szkola-formula}
     D_f(P||Q) 
     = \sum\limits_{\left\{\substack{i,j\,:\\ r_i s_j \neq 0} \right\}}f\left({r_i}{s_j^{-1}}\right)s_j\abs{\braket{u_i}{v_j}}^2+f(0)Q(P=0)+f'(\infty)P(Q=0).
 \end{equation}
\end{lem}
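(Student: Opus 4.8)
The plan is to evaluate the defining series \eqref{eq:f-divergence-Nussbaum-Szkola} for $D_f(P\|Q)$ by a direct term-by-term case analysis, grouping the index pairs $(i,j)\in\mathcal I\times\mathcal I$ according to which of the three quantities $r_i$, $s_j$, and $a_{ij}:=\abs{\braket{u_i}{v_j}}^2$ vanish. Writing $P(i,j)=r_i a_{ij}$ and $Q(i,j)=s_j a_{ij}$, I would first invoke the rearrangement-invariance of the series (proved in Remark~\ref{defn:renyi-divergence-Kullback} in the Appendix) so that the summands may be freely regrouped; this is exactly what licenses splitting the single sum in \eqref{eq:f-divergence-Nussbaum-Szkola} into the three pieces appearing on the right-hand side of \eqref{eq:f-divergence-Nussbaum-Szkola-formula}.

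Next I would run through the cases for a fixed pair $(i,j)$, reading off the value of $f\!\left(P(i,j)/Q(i,j)\right)Q(i,j)$ from the conventions \eqref{eq:convex-f-notations-1} and \eqref{eq:convex-f-notations-2}. When $a_{ij}=0$ both $P(i,j)$ and $Q(i,j)$ vanish, so the term is $0\cdot f(0/0)=0$; the same holds when $r_i=s_j=0$. When $r_i\neq 0$ and $s_j\neq 0$ the ratio is $P(i,j)/Q(i,j)=r_i s_j^{-1}$ and the term equals $f(r_i s_j^{-1})\,s_j a_{ij}$. When $r_i=0$ but $s_j a_{ij}\neq 0$ the ratio is $0$, so by \eqref{eq:convex-f-notations-1} the term is $f(0)\,Q(i,j)$. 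Finally, when $s_j=0$ but $r_i a_{ij}\neq 0$ we have $Q(i,j)=0<P(i,j)$, and the convention $0f(a/0)=af'(\infty)$ applied with $a=P(i,j)$ gives the term $f'(\infty)\,P(i,j)$.

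Summing these contributions reassembles the claimed formula: the pairs with $r_i s_j\neq 0$ yield $\sum_{r_i s_j\neq 0} f(r_i s_j^{-1})\,s_j\abs{\braket{u_i}{v_j}}^2$ (the $a_{ij}=0$ pairs in this range contribute $0$, so their inclusion is harmless); the pairs with $P(i,j)=0$ contribute $f(0)\sum_{P(i,j)=0}Q(i,j)=f(0)\,Q(P=0)$; and the pairs with $Q(i,j)=0$ contribute $f'(\infty)\sum_{Q(i,j)=0}P(i,j)=f'(\infty)\,P(Q=0)$, where $Q(P=0)$ and $P(Q=0)$ denote the $Q$- and $P$-measures of the events $\{P=0\}$ and $\{Q=0\}$. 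The only real care needed is bookkeeping: one must check that each pair is counted in exactly one of the three groups, that a pair with both $P(i,j)=0$ and $Q(i,j)=0$ contributes nothing and is consistently absorbed into the degenerate case, and that the conventions are applied with the correct numerator and sign. I expect this case-tracking, together with the appeal to the Appendix for the well-definedness and rearrangement-invariance of the series (so that the regrouping is legitimate even when $f$ takes values of both signs), to be the only delicate points.
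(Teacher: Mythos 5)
Your proof is correct and takes essentially the same route as the paper: the paper's proof simply cites the general discrete expression \eqref{eq:df-pq-expression} from Remark~\ref{defn:renyi-divergence-Kullback} and specializes it to $P(i,j)=r_i\abs{\braket{u_i}{v_j}}^2$ and $Q(i,j)=s_j\abs{\braket{u_i}{v_j}}^2$, and that expression is itself derived in the Appendix by precisely the rearrangement-invariance argument and the case split on the vanishing of $P(i,j)$ and $Q(i,j)$ that you carry out inline. Your only addition is making explicit that pairs with $\braket{u_i}{v_j}=0$ contribute zero and may be harmlessly included in the sum over $\{i,j : r_is_j\neq 0\}$, a point the paper leaves implicit.
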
 
\begin{proof} 
By \eqref{eq:df-pq-expression} in appendix,  we have
\begin{align*}
    D_f(P||Q) 
    &=\sum\limits_{\left\{\substack{i,j\,: r_i\neq 0, s_j \neq 0,\\ \braket{u_i}{v_j}\neq 0} \right\}}f\left(\frac{r_i}{s_j}\right)s_j\abs{\braket{u_i}{v_j}}^2+\sum\limits_{\left\{\substack{i,j\,: r_i= 0, s_j \neq 0,\\\braket{u_i}{v_j}\neq 0} \right\}}f(0)s_j\abs{\braket{u_i}{v_j}}^2\\&\phantom{.......................................................}+\sum\limits_{\left\{\substack{i,j\,: r_i\neq 0, s_j =0,\\\braket{u_i}{v_j}\neq 0} \right\}}f'(\infty)r_i\abs{\braket{u_i}{v_j}}^2,
\end{align*}
which is same as \eqref{eq:f-divergence-Nussbaum-Szkola-formula}. 
\end{proof}

Now we analyse the measure involved in the definition of quantum $f$-divergence. Before stating the next lemma we introduce the following convention which will be used in its statement and proof: 

When a sum has an empty index set, then the sum is equal to zero (either zero scalar or zero vector of the appropriate vector space depending on the context), i.e.,
\[\sum_{\emptyset}(\cdot) = 0.\]
\begin{lem}\label{lem:measure-f-divergence}The measure $E\mapsto\mel{\sqrt{\sigma}}{\xi^{\Delta_{\rho,\sigma}}(E)}{\sqrt{\sigma}}_2$ on the Borel sigma algebra of $\BR$ is supported on the set $\{0\}\cup\{r_is_j^{-1}\,:\,r_i\neq0, s_j\neq0\}$. For $\lambda \neq 0$,
 \begin{align}\label{eq:measure}
     \mel{\sqrt{\sigma}}{\xi^{\Delta_{\rho,\sigma}}\{ \lambda \}}{\sqrt{\sigma}}_2= \sum\limits_{\{i,j\,:\, r_{i}s_{j}^{-1} =\lambda \}}s_{j}\abs{\braket{u_{i}}{v_{j}}}^{2},
 \end{align}
and 
 \begin{align}\label{eq:measure-at-zero}
     \mel{\sqrt{\sigma}}{\xi^{\Delta_{\rho,\sigma}}\{0\}}{\sqrt{\sigma}}_2= 
     \sum\limits_{\left
             \{\substack{i,j\,:\, r_{i} =0, s_j\neq 0\\\braket{u_i}{v_j}\neq 0} \right\}}s_{j}\abs{\braket{u_{i}}{v_{j}}}^{2}=Q(P=0).
 \end{align}
\end{lem}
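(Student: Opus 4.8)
The plan is to reduce the statement to the explicit spectral decomposition of the relative modular operator $\Delta_{\rho,\sigma}$ established in Appendix \ref{appendix:relative-modular}, combined with a single Hilbert--Schmidt inner-product computation for the vector $\sqrt{\sigma}$.

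First I would record the eigenstructure of $\Delta_{\rho,\sigma}$. On its natural domain in $\mathcal{B}_2(\mathcal{K})$ the operator acts as $X\mapsto \rho X\sigma^{-1}$, so using the spectral data \eqref{eq:spectral-rho-sigma} one checks directly that each rank-one operator $w_{ij}:=\ketbra{u_i}{v_j}$ with $s_j\neq 0$ is an eigenvector: since $\rho\ket{u_i}=r_i\ket{u_i}$ and $\bra{v_j}\sigma^{-1}=s_j^{-1}\bra{v_j}$, we get $\Delta_{\rho,\sigma}w_{ij}=r_i s_j^{-1}w_{ij}$. Because $\{u_i\}$ and $\{v_j\}$ are orthonormal bases of $\mathcal{K}$, the family $\{w_{ij}\}$ is orthonormal in $\mathcal{B}_2(\mathcal{K})$. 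Hence, as in the Appendix, for a Borel set $E$ the projection $\xi^{\Delta_{\rho,\sigma}}(E)$ is the orthogonal projection of $\mathcal{B}_2(\mathcal{K})$ onto the closed linear span of those $w_{ij}$ with $s_j\neq 0$ and $r_i s_j^{-1}\in E$, so that
\[\mel{\sqrt{\sigma}}{\xi^{\Delta_{\rho,\sigma}}(E)}{\sqrt{\sigma}}_2=\sum_{\{i,j\,:\,s_j\neq 0,\ r_is_j^{-1}\in E\}}\abs{\braket{w_{ij}}{\sqrt{\sigma}}_2}^2.\]

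The second ingredient is the coefficient of $\sqrt{\sigma}$ against each $w_{ij}$. Writing $\sqrt{\sigma}=\sum_k\sqrt{s_k}\ketbra{v_k}$ and using $w_{ij}^{*}=\ketbra{v_j}{u_i}$, a direct trace computation gives $\braket{w_{ij}}{\sqrt{\sigma}}_2=\tr(\ketbra{v_j}{u_i}\sqrt{\sigma})=\sqrt{s_j}\braket{u_i}{v_j}$, whence $\abs{\braket{w_{ij}}{\sqrt{\sigma}}_2}^2=s_j\abs{\braket{u_i}{v_j}}^2$. Substituting into the displayed master formula yields
\[\mel{\sqrt{\sigma}}{\xi^{\Delta_{\rho,\sigma}}(E)}{\sqrt{\sigma}}_2=\sum_{\{i,j\,:\,s_j\neq 0,\ r_is_j^{-1}\in E\}}s_j\abs{\braket{u_i}{v_j}}^2.\]
From this all three assertions follow by specialization. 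The support claim is immediate, since the only atoms carrying positive mass sit at the eigenvalues $r_is_j^{-1}$ with $r_i,s_j\neq 0$ and at $0$. Taking $E=\{\lambda\}$ with $\lambda\neq 0$ forces $r_i\neq 0$ (together with $s_j\neq 0$) in the index set, giving \eqref{eq:measure}. Taking $E=\{0\}$ selects exactly the pairs with $s_j\neq 0$ and $r_is_j^{-1}=0$, i.e.\ $r_i=0$; discarding the vanishing terms (those with $\braket{u_i}{v_j}=0$) and recognizing the sum through \eqref{eq:P-and-Q} identifies it with $Q(P=0)$, establishing \eqref{eq:measure-at-zero}.

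The main obstacle I anticipate is the rigorous justification of the spectral-decomposition step in infinite dimensions, rather than the bookkeeping above. One must verify that $\{w_{ij}:s_j\neq 0\}$ captures the entire part of $\mathcal{B}_2(\mathcal{K})$ relevant to $\sqrt{\sigma}$ (the components with $s_j=0$ have vanishing $\sqrt{\sigma}$-coefficient and so never contribute), treat the unboundedness of $\Delta_{\rho,\sigma}$ and the generalized inverse $\sigma^{-1}$ with care, and control the convergence and rearrangement of the possibly infinite sums. These are precisely the technical points settled in Appendix \ref{appendix:relative-modular}, so once that decomposition is invoked the argument collapses to the two elementary computations given here.
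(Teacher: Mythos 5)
Your proposal is correct and follows essentially the same route as the paper's proof: it reduces the lemma to the spectral decomposition of $\Delta_{\rho,\sigma}$ into the orthonormal eigenvectors $X_{ij}=\ketbra{u_i}{v_j}$ from Proposition \ref{prop:spectral-relative-modular}, computes $\abs{\braket{X_{ij}}{\sqrt{\sigma}}_2}^2=s_j\abs{\braket{u_i}{v_j}}^2$ by a trace calculation, and specializes to the atoms $\{\lambda\}$ and $\{0\}$. The only cosmetic difference is that the paper keeps the kernel vectors $X_{ij}$ with $s_j=0$ inside $\xi^{\Delta_{\rho,\sigma}}\{0\}$ and lets them vanish in the computation, whereas you drop them from the start after (correctly) observing that their $\sqrt{\sigma}$-coefficient is zero.
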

\begin{proof}
Recall from Proposition \ref{prop:spectral-relative-modular} that  the spectrum of the relative modular operator $\Delta_{\rho,\sigma}$ is supported on the set $\{0\}\cup\{r_is_j^{-1}\,:\,r_i\neq0, s_j\neq0\}$. Hence it is clear that the measure in the statement of the lemma is supported on the same set.
 By the spectral decomposition of the relative modular operator $\Delta_{\rho,\sigma}$ obtained in Proposition \ref{prop:spectral-relative-modular}, the spectral measure $\xi^{\Delta_{\rho,\sigma}}$ is supported on the eigenvalues of $\Delta_{\rho,\sigma}$, 
 and satisfies

 \begin{align*}
    \xi^{\Delta_{\rho,\sigma}}\{\lambda\} =\begin{cases}\sum\limits_{\{i,j\,:\, r_{i}s_{j}^{-1} =\lambda \}} \ketbra{X_{ij}}, & \text{for } \lambda \neq 0 ;\\
    &\\
   \hfill \sum\limits_{\{i,j\,:\,r_i=0 \text{ or }s_j = 0\}} \ketbra{X_{ij}}, & \text{for } \lambda =0,
    \end{cases}
 \end{align*}
where 
    $X_{ij} = \ketbra{u_i}{v_j} \in \Bt{\CK},$ for all $i,j\in \mathcal{I}$. Therefore for $\lambda\neq 0$ we have,
 \begin{align*}
     \mel{\sqrt{\sigma}}{\xi^{\Delta_{\rho,\sigma}}\{\lambda\}}{\sqrt{\sigma}}_2 &=\sum\limits_{\{i,j\,:\, r_{i}s_{j}^{-1} =\lambda\}} \braket{\sqrt{\sigma}}{X_{ij}}_2\braket{X_{ij}}{\sqrt{\sigma}}_2\\
     &= \sum\limits_{\{i,j\,:\, r_{i}s_{j}^{-1} =\lambda\}}\abs{\braket{\sqrt{\sigma}}{X_{ij}}_2}^2\\
     &= \sum\limits_{\{i,j\,:\, r_{i}s_{j}^{-1} =\lambda\}}\abs{\tr \left\{X_{ij}\sqrt{\sigma}\right\}}^2\\
     &= \sum\limits_{\{i,j\,:\, r_{i}s_{j}^{-1} =\lambda\}}\abs{\tr \left\{\left(\ketbra{u_{i}}{v_{j}}\right)\left(\sum\limits_{k}\sqrt{s_k}\ketbra{v_k}\right)\right\}}^2\\
      &= \sum\limits_{\{i,j\,:\, r_{i}s_{j}^{-1} =\lambda\}}s_{j}\abs{\braket{u_{i}}{v_{j}}}^{2}.
 \end{align*}
 A similar computation as above shows that 
 \begin{align*}
   \mel{\sqrt{\sigma}}{\xi^{\Delta_{\rho,\sigma}}\{0\}}{\sqrt{\sigma}}_2 &=\sum\limits_{\{i,j\,:\, r_{i}=0 \text{ or }s_{j}=0\}} \braket{\sqrt{\sigma}}{X_{ij}}_2\braket{X_{ij}}{\sqrt{\sigma}}_2\\  
   &=\sum\limits_{\{i,j\,:\, r_{i}=0 \text{ or }s_{j}=0\}}s_j\abs{\braket{u_i}{v_j}}^2\\
   &=\sum\limits_{\left\{\substack{i,j\,:\, r_{i}=0, s_{j}\neq0\\\braket{u_i}{v_j}\neq 0}\right\}}s_j\abs{\braket{u_i}{v_j}}^2\\
   &=Q(P=0).
 \end{align*}
\end{proof}

In several occasions below, we will use the following rearrangement trick for a sum of the form $\sum_{k}f(x_k)y_k$ with $y_k>0$ for all $k$.  Notice that if the sum of the negative terms in the series above is strictly bigger than $-\infty$ (or the sum of  positive terms in the series is strictly less than $\infty$), then any rearrangement of the series produces the same sum. In  particular, for $N \in \mathbb{N} \cup \{ \infty \}$, if the sum of the negative terms in the series $ \sum\limits_{k=1}^N f(x_k)y_k$ is strictly bigger than $-\infty$, (or the sum of  its positive terms is strictly less than $\infty$), we have
\begin{equation}\label{eq:trivial-identity}
     \sum\limits_{k=1}^N f(x_k)y_k =\sum\limits_{\lambda \in \{ x_k: k=1, \ldots , N \} }f(\lambda)
     \sum\limits_{\{\ell\,:\,x_{\ell}=\lambda\}}y_{\ell}. 
 \end{equation}
 Note that in the first sum on the right side of \eqref{eq:trivial-identity}, every element $x$ in the sequence $(x_k)_{k=1}^N$ appears exactly once even if the terms $x_k$ are not distinct.
Our next theorem is  the main result in this article. It proves that the quantum $f$-divergence of two states is same as the classical $f$-divergence of corresponding Nussbaum-Szko{\l}a distributions.
\begin{thm}\label{thm:f-divergence}
Let $\rho,\sigma$ be as in (\ref{eq:spectral-rho-sigma}) and $P,Q$ denote the corresponding Nussbaum-Szko{\l}a distributions.  Let $f:(0,\infty)\to\BR$ be a convex (or concave) function  and $D_f(\rho||\sigma)$, $D_{f}(P||Q)$  respectively   denote the quantum $f$-divergence of $\rho$ from $\sigma$ and the classical $f$-divergence  of  $P$ from $Q$. Then \begin{align}
    \label{eq:f-divergence-quantum-equals-classical}
    D_f(\rho||\sigma) = D_f(P||Q).
\end{align}
\end{thm}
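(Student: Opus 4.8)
The plan is to evaluate separately each of the three summands in the definition \eqref{eq:f-divergence} of the quantum divergence $D_f(\rho\|\sigma)$ and to match them, one at a time, with the three summands of the formula \eqref{eq:f-divergence-Nussbaum-Szkola-formula} for $D_f(P\|Q)$ established in Lemma \ref{lem:f-divergence-Nussbaum-Szkola-formula}. Because Lemma \ref{lem:measure-f-divergence} has already computed the relevant spectral measure explicitly, the argument reduces to bookkeeping together with a single rearrangement of a (possibly infinite) series.

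For the integral term I would invoke Lemma \ref{lem:measure-f-divergence}, by which the measure $E \mapsto \mel{\sqrt{\sigma}}{\xi^{\Delta_{\rho,\sigma}}(E)}{\sqrt{\sigma}}_2$ is purely atomic and supported on $\{0\}\cup\{r_i s_j^{-1} : r_i\neq 0,\ s_j\neq 0\}$. Since the domain of integration $(0,\infty)$ omits the atom at $0$, the integral reduces to a sum over the nonzero atoms, and inserting the atom weights from \eqref{eq:measure} gives
\[
\int_{0^+}^{\infty} f(\lambda)\mel{\sqrt{\sigma}}{\xi^{\Delta_{\rho,\sigma}}(\dd\lambda)}{\sqrt{\sigma}}_2
= \sum_{\lambda} f(\lambda)\sum_{\{i,j\,:\, r_i s_j^{-1}=\lambda\}} s_j\abs{\braket{u_i}{v_j}}^2,
\]
where the outer sum ranges over the distinct values $\lambda = r_i s_j^{-1}$ with $r_i, s_j\neq 0$. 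Applying the rearrangement identity \eqref{eq:trivial-identity} with $x_{(i,j)} = r_i s_j^{-1}$ and $y_{(i,j)} = s_j\abs{\braket{u_i}{v_j}}^2$, over the pairs with $r_i s_j\neq 0$ and $\braket{u_i}{v_j}\neq 0$, collapses the right-hand side to $\sum_{\{i,j\,:\, r_i s_j\neq 0\}} f(r_i s_j^{-1}) s_j\abs{\braket{u_i}{v_j}}^2$, which is exactly the first summand of \eqref{eq:f-divergence-Nussbaum-Szkola-formula}; the pairs with $\braket{u_i}{v_j} = 0$ carry zero weight and may be freely reinstated.

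It then remains to identify the two boundary terms. Expanding $\sigma$ by \eqref{eq:spectral-rho-sigma} and writing $\Pi_\rho^\perp = \sum_{\{i\,:\, r_i = 0\}}\ketbra{u_i}$, I would compute $\tr(\sigma\Pi_\rho^\perp) = \sum_{\{i\,:\, r_i=0\}}\sum_j s_j\abs{\braket{u_i}{v_j}}^2$; discarding the vanishing terms (those with $s_j=0$ or $\braket{u_i}{v_j}=0$) leaves $Q(P=0)$, in agreement with the atom-at-zero computation \eqref{eq:measure-at-zero}. The symmetric computation gives $\tr(\rho\Pi_\sigma^\perp) = P(Q=0)$. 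Adding the three matched contributions produces \eqref{eq:f-divergence-quantum-equals-classical}.

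The one genuine subtlety, and therefore the step I would treat most carefully, is the legitimacy of the rearrangement \eqref{eq:trivial-identity}: the passage from the $\lambda$-indexed sum to the $(i,j)$-indexed sum is valid only if the series is not an indeterminate difference of the form $\infty-\infty$. This is exactly what Remark \ref{rmk:f-divergence} provides: for convex $f$ the integral term satisfies \eqref{eq:f-divergence-integral-finite-convex}, so the negative part of the series is summable, whereas for concave $f$ it satisfies \eqref{eq:f-divergence-integral-finite-concave}, so the positive part is summable. In either case the hypothesis of \eqref{eq:trivial-identity} is met, every rearrangement returns the same value, and the three-term matching goes through unconditionally.
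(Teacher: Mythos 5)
Your proposal is correct and follows essentially the same route as the paper's own proof: both reduce the integral term to an atomic sum via Lemma~\ref{lem:measure-f-divergence}, justify the rearrangement \eqref{eq:trivial-identity} by the finiteness statements \eqref{eq:f-divergence-integral-finite-convex} and \eqref{eq:f-divergence-integral-finite-concave} of Remark~\ref{rmk:f-divergence}, and match the two boundary terms through the direct trace computations $\tr\left(\sigma\Pi_\rho^\perp\right)=Q(P=0)$ and $\tr\left(\rho\Pi_\sigma^\perp\right)=P(Q=0)$. Your extra care in first excluding and then reinstating the zero-weight pairs with $\braket{u_i}{v_j}=0$, so that the hypothesis $y_k>0$ of \eqref{eq:trivial-identity} is literally satisfied, is a minor refinement that the paper leaves implicit.
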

\begin{proof} We will show that each term on the right side of \eqref{eq:f-divergence} matches with the corresponding terms on the right side of \eqref{eq:f-divergence-Nussbaum-Szkola-formula}
We first evaluate the first term on the right side of \eqref{eq:f-divergence}. By Lemma \ref{lem:measure-f-divergence} we have 
   \begin{align*}
\int_{0^+}^{\infty}f(\lambda)\mel{\sqrt{\sigma}}{\xi^{\Delta_{\rho,\sigma}}(\dd \lambda)}{\sqrt{\sigma}}_2&= \sum_{\lambda\in \text{sp}(\Delta_{\rho,\sigma})\setminus \{0\}}f(\lambda)\sum\limits_{\{i,j\,:\, r_{i}s_{j}^{-1} =\lambda\}}s_{j}\abs{\braket{u_{i}}{v_{j}}}^{2}.
   \end{align*}
   If the function $f$ is convex  then by Equation \eqref{eq:f-divergence-integral-finite-convex} the above sum is strictly bigger than $-\infty$ and hence the sum of its negative terms is also bigger than $-\infty$. Thus \eqref{eq:trivial-identity} is applicable. Similarly, if $f$ is concave one can use \eqref{eq:f-divergence-integral-finite-concave} and conclude that \eqref{eq:trivial-identity} is applicable in this case as well. Therefore, we have
\[ \int_{0^+}^{\infty}f(\lambda)\mel{\sqrt{\sigma}}{\xi^{\Delta_{\rho,\sigma}}(\dd \lambda)}{\sqrt{\sigma}}_2 =\sum_{\left\{\substack{i,j\,:\,\\r_i\neq0, s_j\neq 0}\right\}}f(r_is_j^{-1})s_j\abs{\braket{u_i}{v_j}}^2.\]
   This is same as the first term on the right side of  \eqref{eq:f-divergence-Nussbaum-Szkola-formula}. Now \begin{align*}
       \tr \left(\sigma\Pi_\rho^\perp\right)&= \tr \left(\sum_js_j\ketbra{v_j}{v_j}\sum_{\{i\,:\,r_i=0\}}\ketbra{u_i}{u_i}\right)\\
       &=\sum_{\left\{\substack{i,j\,:\,\\r_i=0}\right\}}s_j\abs{\braket{u_i}{v_j}^2}.
   \end{align*}
 Therefore, \[f(0) \tr \left(\sigma\Pi_\rho^\perp\right) = f(0)Q(P=0),\] which is same as the second term  on the right side  of \eqref{eq:f-divergence-Nussbaum-Szkola-formula}. A similar computation as above shows that
   \[f'(\infty)\tr\left(\rho\Pi_\sigma^\perp\right) = f'(\infty)P(Q=0),\] which is same as the third term  on the right side of  \eqref{eq:f-divergence-Nussbaum-Szkola-formula}. Thus we proved \eqref{eq:f-divergence-quantum-equals-classical}.
\end{proof}

The correspondence between quantum states and Nussbaum-Szkoła distributions we defined in this article have the following important properties.

\begin{prop}\label{prop:rho-equals-sigma-P-equals-Q}
Let $\rho$ and $\sigma$ be as in (\ref{eq:spectral-rho-sigma})  and let $P$ and $Q$ be as in (\ref{eq:P-and-Q}) then 
\[P=Q \Leftrightarrow \rho=\sigma.\]
\end{prop}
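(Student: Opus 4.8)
The plan is to notice that the two implications are near mirror images, both resting on the elementary fact that eigenvectors of a selfadjoint operator associated with distinct eigenvalues are orthogonal. The unifying starting point is the pointwise identity
\[
P(i,j)-Q(i,j)=(r_i-s_j)\abs{\braket{u_i}{v_j}}^2,\qquad (i,j)\in\mathcal{I}\times\mathcal{I},
\]
which follows at once from \eqref{eq:P-and-Q}. Consequently $P=Q$ is \emph{equivalent} to the condition that $(r_i-s_j)\abs{\braket{u_i}{v_j}}^2=0$ for every pair $(i,j)$; that is, for each pair either $\braket{u_i}{v_j}=0$ or $r_i=s_j$.

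For the implication $\rho=\sigma\Rightarrow P=Q$, I would fix a pair $(i,j)$ with $\braket{u_i}{v_j}\neq 0$. Since $\rho=\sigma$, the vector $u_i$ is an eigenvector of $\rho$ for the eigenvalue $r_i$ and $v_j$ is an eigenvector of the \emph{same} operator $\rho$ for the eigenvalue $s_j$; being non-orthogonal, the distinct-eigenvalue orthogonality fact forces $r_i=s_j$. Hence $(r_i-s_j)\abs{\braket{u_i}{v_j}}^2=0$ whether or not $\braket{u_i}{v_j}$ vanishes, and the pointwise identity above yields $P=Q$.

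For the converse $P=Q\Rightarrow\rho=\sigma$, I would first pass from the quadratic condition to a linear one: for every $(i,j)$, the equality $(r_i-s_j)\abs{\braket{u_i}{v_j}}^2=0$ implies $(r_i-s_j)\braket{u_i}{v_j}=0$ (if $\braket{u_i}{v_j}=0$ this is immediate, and otherwise $r_i=s_j$). Then I would compute the mixed matrix elements of $\rho-\sigma$: using $\bra{u_i}\rho=r_i\bra{u_i}$ and $\sigma\ket{v_j}=s_j\ket{v_j}$,
\[
\mel{u_i}{\rho-\sigma}{v_j}=(r_i-s_j)\braket{u_i}{v_j}=0\qquad\text{for all }i,j.
\]
Since $\{u_i\}_{i\in\mathcal{I}}$ and $\{v_j\}_{j\in\mathcal{I}}$ are orthonormal bases of $\CK$ and $\rho-\sigma$ is bounded, the vanishing of all these matrix elements forces $\rho-\sigma=0$: for each $j$ the vector $(\rho-\sigma)v_j$ has all Fourier coefficients $\braket{u_i}{(\rho-\sigma)v_j}$ equal to zero, so $(\rho-\sigma)v_j=0$, and by boundedness and linearity $\rho-\sigma$ annihilates the dense linear span of $\{v_j\}_{j\in\mathcal{I}}$, hence vanishes identically.

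I expect the only point requiring genuine care to be this last step of the converse, where one must justify that the vanishing of all mixed matrix elements truly kills the operator in infinite dimensions; this is exactly where boundedness of $\rho-\sigma$ (guaranteed since both states are trace class) and the completeness of the two orthonormal bases are used. Everything else reduces to the short orthogonality observation about eigenvectors, deployed in the two directions.
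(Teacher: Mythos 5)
Your proposal is correct, and its converse direction takes a genuinely different --- and considerably shorter --- route than the paper's. The paper proves $P=Q\Rightarrow\rho=\sigma$ in three stages: it first invokes Lemma~\ref{lem:support-condition} (plus a symmetry argument) to get $\supp\rho=\supp\sigma$; it then shows $\operatorname{sp}(\rho)=\operatorname{sp}(\sigma)$ by arguing that a nonzero eigenvalue $r_{i_0}$ absent from $\operatorname{sp}(\sigma)$ would force $u_{i_0}$ to be orthogonal to the whole basis $\{v_j : s_j\neq 0\}$ of $\supp\sigma$; finally it matches the individual eigenspaces through two orthogonality claims. You collapse all of this into a single mixed-matrix-element computation: from $(r_i-s_j)\abs{\braket{u_i}{v_j}}^2=0$ you correctly pass to the linear statement $(r_i-s_j)\braket{u_i}{v_j}=0$, read it as $\mel{u_i}{\rho-\sigma}{v_j}=0$ for all $i,j$, and conclude $\rho-\sigma=0$ from completeness of the two orthonormal bases together with boundedness of $\rho-\sigma$ --- which is indeed the one step needing care in infinite dimensions, and which you justify properly ($(\rho-\sigma)v_j$ has all Fourier coefficients zero, hence vanishes, and a bounded operator vanishing on the dense span of $\{v_j\}$ vanishes identically). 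Your forward direction is also slightly more careful than the paper's, which dismisses $\rho=\sigma\Rightarrow P=Q$ as clear: since the decompositions in \eqref{eq:spectral-rho-sigma} may use different orthonormal bases when eigenvalues are degenerate, the implication genuinely rests on your observation that non-orthogonal eigenvectors of one selfadjoint operator must share an eigenvalue. What the paper's longer route buys is explicit intermediate structure --- equality of supports, spectra, and eigenspaces --- which ties in with Lemma~\ref{lem:support-condition} and Proposition~\ref{prop:support-condition-iff-absolute-continuity}; what your route buys is economy and self-containedness, since it needs no auxiliary lemma at all.
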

The formula for classical  $f$-divergence simplifies if $P\ll Q$ (see \eqref{eq:f-div}). This property for the Nussbaum-Szkoła distributions translate to the condition $\supp \rho \subseteq \supp \sigma$ as shown in the next proposition.
 \begin{prop}\label{prop:support-condition-iff-absolute-continuity}
Let $\rho$ and $\sigma$ be as in (\ref{eq:spectral-rho-sigma})  and let $P$ and $Q$ be as in (\ref{eq:P-and-Q}), then \[\supp \rho \subseteq \supp \sigma \Leftrightarrow P\ll Q.\]
\end{prop}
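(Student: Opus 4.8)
The plan is to prove both implications directly from the definitions, reducing absolute continuity of the discrete distributions to the pointwise condition $Q(i,j)=0 \Rightarrow P(i,j)=0$, and translating the support condition into a statement about the eigenbases $\{u_i\}$ and $\{v_j\}$. The two facts I would record first are that $\supp\rho = \overline{\spn}\{u_i : r_i\neq 0\}$ and $\ker\sigma = (\supp\sigma)^\perp = \overline{\spn}\{v_j : s_j = 0\}$ (the closures are needed only in the infinite-dimensional case), together with the observation that for discrete distributions on $\mathcal{I}\times\mathcal{I}$ one has $P\ll Q$ precisely when $Q(i,j)=0$ forces $P(i,j)=0$.

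For the forward implication I would assume $\supp\rho\subseteq\supp\sigma$ and take any pair $(i,j)$ with $Q(i,j)=s_j\abs{\braket{u_i}{v_j}}^2=0$. If $\braket{u_i}{v_j}=0$ then $P(i,j)=0$ immediately, so the only case to treat is $s_j=0$ with $\braket{u_i}{v_j}\neq 0$. Here I would argue by contradiction: were $r_i\neq 0$, then $u_i\in\supp\rho\subseteq\supp\sigma$, while $s_j=0$ gives $v_j\in\ker\sigma=(\supp\sigma)^\perp$, forcing $\braket{u_i}{v_j}=0$ --- a contradiction. Hence $r_i=0$ and $P(i,j)=0$, which establishes $P\ll Q$.

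For the reverse implication I would prove the contrapositive. Suppose $\supp\rho\not\subseteq\supp\sigma$; since $\supp\rho$ is spanned by those $u_i$ with $r_i\neq 0$, there is an index $i_0$ with $r_{i_0}\neq 0$ and $u_{i_0}\notin\supp\sigma$. Because $u_{i_0}\notin(\ker\sigma)^\perp$, its orthogonal projection onto $\ker\sigma=\overline{\spn}\{v_j:s_j=0\}$ is a nonzero vector, so there must exist an index $j_0$ with $s_{j_0}=0$ and $\braket{u_{i_0}}{v_{j_0}}\neq 0$. For this pair, $Q(i_0,j_0)=s_{j_0}\abs{\braket{u_{i_0}}{v_{j_0}}}^2=0$ whereas $P(i_0,j_0)=r_{i_0}\abs{\braket{u_{i_0}}{v_{j_0}}}^2\neq 0$, witnessing $P\not\ll Q$.

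The argument is essentially a bookkeeping exercise once the support and kernel descriptions are in place, so I do not anticipate a serious obstacle. The one point requiring a little care is the extraction of the index $j_0$ in the reverse direction, where in infinite dimensions one must justify that a vector lying outside $\supp\sigma$ has nonzero inner product with at least one eigenvector $v_{j_0}$ of $\sigma$ belonging to $\ker\sigma$; this follows from $\{v_j\}$ being an orthonormal basis, so that the nonzero $\ker\sigma$-component of $u_{i_0}$ has a nonvanishing coordinate along some such $v_{j_0}$.
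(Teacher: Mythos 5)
Your proof is correct and takes essentially the same route as the paper's: the paper condenses exactly your two implications into a single chain of equivalences, $P\nll Q \Leftrightarrow \exists (i,j)$ with $\braket{u_i}{v_j}\neq 0$, $s_j=0$, $r_i\neq 0 \Leftrightarrow \exists i$ with $u_i\in\supp\rho$, $u_i\notin(\ker\sigma)^\perp \Leftrightarrow \supp\rho\nsubseteq\supp\sigma$, using the same identification $\supp\sigma=(\ker\sigma)^\perp$ and the same eigenbasis descriptions of the supports. The point you single out as needing care --- extracting $j_0$ with $s_{j_0}=0$ and $\braket{u_{i_0}}{v_{j_0}}\neq 0$ from $u_{i_0}\notin(\ker\sigma)^\perp$ --- is precisely the step the paper passes through implicitly in its third equivalence, and your justification of it via the orthonormal basis $\{v_j\}$ is sound.
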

For proving Proposition \ref{prop:rho-equals-sigma-P-equals-Q}, we need the following lemma.
\begin{lem}\label{lem:support-condition}
Let $\rho$ and $\sigma$ be as in (\ref{eq:spectral-rho-sigma}). Then  $\supp \rho \subseteq \supp \sigma$ if and only if $s_j=0$ for some $j$ implies that for every $i$ at least one of the two quantities $\{\braket{u_i}{v_j}, r_i \}$ is equal to zero.  
\end{lem}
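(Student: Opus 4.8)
The plan is to pass from the stated support inclusion to the reverse inclusion of kernels, and then to reduce that inclusion to a statement about the individual eigenvectors $v_j$ of $\sigma$.

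First I would record the kernels and supports explicitly from the spectral data in \eqref{eq:spectral-rho-sigma}. Since $\rho=\sum_i r_i\ketbra{u_i}$ with $\{u_i\}_i$ an orthonormal basis of eigenvectors, a vector $\psi=\sum_i c_i u_i$ satisfies $\rho\psi=0$ if and only if $c_i=0$ for every $i$ with $r_i\neq 0$; hence $\ker\rho=\overline{\spn\{u_i\,:\,r_i=0\}}$, and because $\rho$ is positive (so self-adjoint), $\supp\rho=(\ker\rho)^\perp=\overline{\spn\{u_i\,:\,r_i\neq 0\}}$. The analogous identities hold for $\sigma$ with its eigenbasis $\{v_j\}_j$. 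Taking orthogonal complements converts the inclusion $\supp\rho\subseteq\supp\sigma$ into the equivalent inclusion $\ker\sigma\subseteq\ker\rho$, so it suffices to characterize the latter.

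Next I would reduce $\ker\sigma\subseteq\ker\rho$ to a condition on the single vectors $v_j$. Since $\ker\sigma=\overline{\spn\{v_j\,:\,s_j=0\}}$ and $\ker\rho$ is a closed subspace, the inclusion $\ker\sigma\subseteq\ker\rho$ holds if and only if each generator $v_j$ with $s_j=0$ lies in $\ker\rho$. Finally, using $\ker\rho=(\supp\rho)^\perp=\bigl(\overline{\spn\{u_i\,:\,r_i\neq 0\}}\bigr)^\perp$, membership $v_j\in\ker\rho$ is equivalent to $\braket{u_i}{v_j}=0$ for every $i$ with $r_i\neq 0$, that is, to the assertion that for each $i$ at least one of the quantities $\braket{u_i}{v_j}$ and $r_i$ vanishes. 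Chaining these equivalences together yields the lemma.

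The only place where infinite dimensionality plays any role is the reduction step above: one must know that containment of all the generators $v_j$ (those with $s_j=0$) in the closed subspace $\ker\rho$ forces the entire closed span $\ker\sigma$ to sit inside $\ker\rho$. This is immediate from the closedness of $\ker\rho$, so I do not expect a genuine obstacle; the whole argument is bookkeeping with orthonormal bases and orthogonal complements, with closedness invoked exactly once to handle the closure.
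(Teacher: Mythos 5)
Your proof is correct and takes essentially the same route as the paper: the paper likewise passes from $\supp\rho\subseteq\supp\sigma$ to $\ker\sigma\subseteq\ker\rho$, reduces that inclusion to the generators $v_j$ with $s_j=0$, and identifies membership $v_j\in\ker\rho$ with orthogonality to the $u_i$ having $r_i\neq 0$. The only cosmetic difference is that you package both implications as a single chain of equivalences via orthogonal complements of closed subspaces, whereas the paper argues the two directions separately (its forward direction being the same orthogonality bookkeeping done vector by vector).
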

\begin{proof} ($\Rightarrow$) Assume $\supp \rho \subseteq \supp \sigma$.
The assumption that $s_j=0$ implies that $v_j\in \ker \sigma$. Now the condition $\braket{u_i}{v_j}\neq 0$ implies that $u_i\notin (\ker{\sigma})^{\perp} =\supp \sigma$. Since $\supp \sigma\supseteq\supp \rho$ and $u_i$ is an eigenvector of $\rho$ we see that \[u_i\notin\supp\rho \Rightarrow u_i\in \ker \rho \Rightarrow r_i=0.\]

($\Leftarrow$) We will prove that $\ker \sigma \subseteq \ker \rho$. It is enough to prove that every $v_j$ for which $s_j = 0$ belong to $\ker \rho$. Fix such  $j$ such that $s_j=0$, by our assumption, $\braket{u_i}{v_j}=0$ for all $i$ such that $r_i\neq 0$ and hence $\sum_{\{i:r_i\neq 0\}}\braket{u_i}{v_j}\ket{u_i} = 0$. This means that the projection of $v_j$ to  $\supp \rho$ is  the zero vector which means that $v_j\in \ker \rho$.
\end{proof}

\subsubsection*{Proof of Proposition \ref{prop:rho-equals-sigma-P-equals-Q}}
\begin{proof}  Clearly, if $\rho=\sigma$ then $P=Q$. Now assume $P=Q$.
By definition \begin{align}\label{eq:P-equals-Q-implies}
    P=Q \Rightarrow r_i\abs{\braket{u_i}{v_j}}^2 = s_j\abs{\braket{u_i}{v_j}}^2,\quad \forall i,j.
\end{align}
Therefore, $s_j=0$ for some $j$ implies that for every $i$ at least one of the two quantities $\{\braket{u_i}{v_j}, r_i \}$ is equal to zero. Now by Lemma \ref{lem:support-condition}, $\supp\rho \subseteq \supp \sigma$. Also $\supp \sigma \subseteq \supp \rho$ by symmetry of the situation. Hence we have \[\supp\rho = \supp \sigma.\]
Thus $\ker\rho = \ker\sigma$. Therefore, to prove that $\rho = \sigma$, we will show that the nonzero eigenvalues and the corresponding eigenspaces of $\rho$ and $\sigma$ are same.
Now fix $i_0\in \mathcal{I}$ such that $r_{i_0}\neq 0$. If $r_{i_0}\neq s_j$ for all $j$, the second equality in (\ref{eq:P-equals-Q-implies}) shows that $\braket{u_{i_0}}{v_j} =0$ for all $j$ such that $s_j\neq 0$ but this is impossible because $\{v_j|s_j\neq0\}$ is an orthonormal basis for $\supp \sigma$ and $0\neq u_{i_0}\in \supp \sigma$. Therefore, for each $i_0\in \mathcal{I}$ such that $r_{i_0}\neq 0$, there exists $j_0\in \mathcal{I}$ such that $r_{i_0}=s_{j_0}.$ Hence $\operatorname{sp}(\rho) \subseteq\operatorname{sp}(\sigma)$, where `$\operatorname{sp}$' denotes spectrum. A similar argument shows that $\operatorname{sp}(\sigma) \subseteq \operatorname{sp}(\rho)$. Thus we have  \[\operatorname{sp}(\rho) =\operatorname{sp}(\sigma).\]
Fix ${i_0}$ and ${j_0}$ such that $r_{i_0}=s_{j_0}\neq 0$.
Let $R_{i_0}$ denote the eigenspace of $\rho$ corresponding to the eigenvalue $r_{i_0}$ and $S_{j_0}$ denote the eigenspace of $\sigma$ corresponding to the eigenvalue $s_{j_0}$. We will show that $R_{i_0}=S_{j_0}$, which will complete the proof since $r_{i_{0}}$ is an arbitrary non zero eigenvalue. It is enough to show the following two claims: \begin{enumerate}
    \item \label{claim:1} If $r_i=r_{i_0}$ for some $i$ then $u_i\perp v_j$ for all $j\in \mathcal{I}$ with $s_j\neq s_{j_0}$;
    \item If $s_j=s_{j_0}$ for some $j$ then $v_j\perp u_i$ for all $i\in \mathcal{I}$ with $r_i\neq r_{i_0}$.
\end{enumerate}
Proof of both the claims above are similar so we will prove \ref{claim:1} only. Fix $i$ such that $r_i=r_{i_0}$ and $j$ such that $s_j\neq s_{j_0}$. Since $r_{i_0}=s_{j_0}$, we have $r_i\neq s_j$. Thus by \eqref{eq:P-equals-Q-implies} we have $\braket{u_i}{v_j}=0$, which proves \ref{claim:1}.
\end{proof}
\subsubsection*{Proof of Proposition \ref{prop:support-condition-iff-absolute-continuity}}
\begin{proof}Assume $ P\nll Q$, we have
\begin{align*}
    P\nll Q &\Leftrightarrow \exists (i,j) \in \mathcal{I}\times \mathcal{I} \textnormal{ such that } \braket{u_i}{v_j} \neq 0, s_j=0,r_i\neq 0\\
    &\Leftrightarrow \exists (i,j) \in \mathcal{I}\times \mathcal{I} \textnormal{ such that } \braket{u_i}{v_j} \neq 0, v_j\in \ker \sigma, u_i \in \supp \rho\\
    &\Leftrightarrow \exists i\in \mathcal{I} \textnormal{ such that } u_i \in \supp \rho,  u_i \notin (\ker \sigma )^\perp\\
    &\Leftrightarrow \exists i\in \mathcal{I} \textnormal{ such that } u_i \in \supp \rho,  u_i \notin \supp \sigma\\
     &\Leftrightarrow \supp \rho \nsubseteq \supp \sigma.
\end{align*}
\end{proof}

\section{Quantum \texorpdfstring{\ensuremath{f}}{}-divergence Inequalities}\label{sec:quant-f-div-inequalities}
Our main Theorem~\ref{thm:f-divergence} provides a framework to obtain quantum results from classical ones. To illustrate this, we
 list a few quantum $f$-divergence inequalities that follow immediately from the corresponding classical counterparts.  
 The notations used below are as in Example \ref{eg:f-div}. Note also, that we use the same notations to denote any specific classical or quantum $f$-divergence. It will be clear from the context whether we are discussing quantum or classical case (we reserve the letters $\rho$ and $\sigma$ to denote quantum states and the letters $P$ and $Q$  for classical probability distributions). The list of inequalities provided below are not exhaustive by any means; our purpose is to show the use of our main result in obtaining such inequalities. One may refer to the existing literature on classical $f$-divergences for deducing several other quantum $f$-divergence results from the existing classical ones. For example,  the article \cite{sason-verdu-2016} contains several $f$-divergence inequalities that are not presented here but are easily generalized to the quantum case using our main theorem in this article.

\begin{enumerate}
    \item The squared Hellinger distance in classical probability  (Example \ref{eg:classical-examples}, item \ref{item:hellinger}) satisfies the following bounds with the total variation distance \label{item:1}
 \cite[p. 25]{Lecam-grace1990}
$$
\begin{aligned}
\mathscr{H}^2(P \| Q) & \leq V(P\|Q)^2 \leq \mathscr{H}(P \| Q)\sqrt{2-\mathscr{H}^2(P \| Q)},
\end{aligned}
$$ hence we have
\begin{align}\label{item:1.1}
  \mathscr{H}^2(\rho \| \sigma) & \leq V(\rho\|\sigma)^2 
 \leq \mathscr{H}(\rho \| \sigma)\sqrt{2-\mathscr{H}^2(\rho \| \sigma)}.
    \end{align}
\item In the classical case the Kullback-Leibler divergence is bounded above by a function of $\chi^2$-divergence as follows \label{item:3} \cite[Theorem 5]{gibbs-su-2002},
$$
D(P \| Q) \leq \log \left(1+\chi^2(P \| Q)\right),
$$ 
where the same logarithm is used as in the definition of Kullback-Leibler divergence. Hence we have the same bound in the quantum case as well\begin{align} \label{item:3.1}
    D(\rho \| \sigma) &\leq \log \left(1+\chi^2(\rho \| \sigma)\right).
\end{align}
\item \label{item:4}  The classical $\chi^2$-divergence is bounded above by a function of Hellinger $\alpha$-divergence for all $\alpha > 2$ \cite[Corollary 5.6]{Guntuboyina-Saha-schiebinger2013}
$$
\chi^2(P \| Q) \leq\left(1+(\alpha-1) \mathscr{H}_\alpha(P \| Q)\right)^{\frac{1}{\alpha-1}}-1,
$$
Therefore, for all $\alpha > 2$,   \begin{align}\label{item:4.1}
     \chi^2(\rho \| \sigma) \leq\left(1+(\alpha-1) \mathscr{H}_\alpha(\rho \| \sigma)\right)^{\frac{1}{\alpha-1}}-1.
 \end{align}
\item \label{item:5} The $\chi^2$-divergence is bounded below a function of total variation distance as follows
 \cite[eq. (58)]{reid-11a}
$$
\chi^2(P \| Q) \geq\left\{\begin{array}{cl}
V(P\|Q)^2, & V(P\|Q) \in[0,1) \\
\frac{V(P\|Q)}{2-V(P\|Q)}, & V(P\|Q) \in [1,2]
\end{array}\right., 
$$ hence we have the same bound in the quantum case as well
\begin{align}
     \label{item:5.1}
\chi^2(\rho \| \sigma) &\geq\left\{\begin{array}{cl}
V(\rho\|\sigma)^2, & V(\rho\|\sigma) \in[0,1] \\
\frac{V(\rho\|\sigma)}{2-V(\rho\|\sigma)}, & V(\rho\|\sigma) \in(1,2) \end{array}\right..
\end{align}
\item \label{item:6} From the article \cite{Simic2008} we obtain the following inequalities: 
$$
\begin{aligned}
\frac{D^2(P \| Q)}{D(Q \| P)} & \leq \frac{1}{2} \chi^2(P \| Q) ,\quad  \text{\cite[(12) and Remark 4]{Simic2008}},\\
16 \mathscr{H}^4(P \| Q)   &\leq {D(P \| Q) D(Q \| P)} 
 \leq \frac{1}{4} {\chi^2(P \| Q) \chi^2(Q \| P)},\\ &\phantom{...........................}\text{\cite[Proposition 3 (i) and Remark 4]{Simic2008}},\\
8 \mathscr{H}^2(P \| Q)  & \leq D(P \| Q)+D(Q \| P) 
 \leq \frac{1}{2}\left(\chi^2(P \| Q)+\chi^2(Q \| P)\right),\\ &\phantom{...........................} \text{\cite[Proposition 3 (ii) and Remark 4]{Simic2008}},
\end{aligned}
$$
where the constants $16$ and $4$ that appear above are different than \cite{Simic2008} because our definition of $\mathscr{H}^2$ is half of that in \cite{Simic2008}. Moreover, the logarithm in the definition of Kullback-Leibler is taken with base $e$. Therefore, for the quantum case we get
 \begin{align}
\label{item:6.1} 
\frac{D^2(\rho \| \sigma)}{D(\sigma \| \rho)} & \leq \frac{1}{2} \chi^2(\rho \| \sigma),  \\
\begin{split}
16 \mathscr{H}^4(\rho \| \sigma)  & \leq D(\rho \| \sigma) D(\sigma \| \rho)
 \leq \frac{1}{4} \chi^2(\rho \| \sigma) \chi^2(\sigma \| \rho), \end{split}\\
\begin{split}
8 \mathscr{H}^2(\rho \| \sigma)  & \leq D(\rho \| \sigma)+D(\sigma \| \rho)  \leq \frac{1}{2}\left(\chi^2(\rho \| \sigma)+\chi^2(\sigma \| \rho)\right), \end{split}
\end{align} where the logarithm in the definition of Umegaki relative entropy is taken with base $e$.
\item \label{item:7} If the logarithm in the definition of Kullback-Leibler is taken with base $e$,  we have \cite[eq. (2.8)]{diaconis-saloff-1996} 
$$
D(P \| Q) \leq \frac{1}{2}\left(V(P,Q)+\chi^2(P \| Q)\right).
$$(Notice that our definition of total variation is twice the one that appears in \cite{diaconis-saloff-1996}.) With the convention about the logarithm in Umegaki relative entropy, we have the following inequality in the quantum case
\begin{align}
    \label{item:7.1} D(\rho \| \sigma) &\leq \frac{1}{2}\left(V(\rho\|\sigma)+\chi^2(\rho \| \sigma)\right).
\end{align}
\item \label{item:8} The symmetrized versions of the Kullback-Leibler and $\chi^2$-divergences satisfy the following bounds  \cite[Corollary 32]{reid-11a}, 
$$
\begin{aligned}
D(P \| Q)  +D(Q \| P) 
&\geq  2V(P||Q) \log \left(\frac{2+V(P\|Q)}{2-V(P\|Q)}\right), \\
\chi^2(P \| Q)  +\chi^2(Q \| P) 
&\geq  \frac{8V(P\|Q)^2}{4-V(P\|Q)^2}.
\end{aligned} 
$$ Therefore, the corresponding quantum versions also satisfy the same bounds
\begin{align}
    \label{item:8.1}
D(\rho \| \sigma)  +D(\sigma \| \rho) 
&\geq  2V(\rho\|\sigma) \log \left(\frac{2+V(\rho\|\sigma)}{2-V(\rho\|\sigma)}\right), \\
\chi^2(\rho \| \sigma)  +\chi^2(\sigma \| \rho) 
&\geq  \frac{8V(\rho\|\sigma)^2}{4-V(\rho\|\sigma)^2}.\label{item:8.2}
\end{align}
\item\label{item:9} 
The Hellinger $\mathscr{H}_{\alpha}$,  the Rényi $\alpha$-relative entropy $D_\alpha$ and the Kullback-Leibler relative entropy $D$ satisfy the following inequality
\cite[ Proposition 2.15]{liese-vajda-1987}: 
$$
\mathscr{H}_\alpha(P \| Q)  \leq D_\alpha(P \| Q) \leq D(P \| Q)\leq D_{\beta}(P \| Q)\leq \mathscr{H}_{\beta}(P \| Q),
$$
for $0<\alpha<1<\beta<\infty$, where the logarithm used in the definitions of Kullback-Leibler and Rényi are with respect base $e$. Therefore, their quantum counterparts also  satisfy the inequalities
\begin{align}
\mathscr{H}_\alpha(\rho \| \sigma) &\log e \leq D_\alpha(\rho \| \sigma) \leq D(\rho \| \sigma) \leq D_{\beta}(\rho \| \sigma)\leq \mathscr{H}_{\beta}(\rho \| \sigma),
 \end{align} for $0<\alpha<1<\beta<\infty$.

\item \label{item:10}  By \cite[Theorem 3.1]{csiszar1972} we get the following\footnote{By setting $m=1$ and $w_1=1$ in \cite[Theorem 3.1]{csiszar1972} and using Remark \ref{rmk:f-div-1} we obtain that $Q^* = P$, where $Q^*$ is as in \cite{csiszar1972}.}: Let $f:(0, \infty) \rightarrow [0,\infty)$ be a strictly convex function such that $f(1)=0$.
Then there exists a real-valued function $\psi_f$ with $\lim _{x \downarrow 0} \psi_f(x)=0$ such that
$$
V(P\|Q) \leq \psi_f\left(D_f(P \| Q)\right).
$$
 This implies that if 
$$
\lim_{n \rightarrow \infty} D_f\left(P_n \| Q_n\right)=0 \Rightarrow \lim _{n \rightarrow \infty} V(P_n\| Q_n)=0.
$$ 
The assumptions on $f$ are valid for the function giving rise to the squared Hellinger distance and $\chi^2$-relative entropy.
Therefore, 
if $f:(0, \infty) \rightarrow [0,\infty)$ is a strictly convex function such that $f(1)=0$, 
then there exists a real-valued function $\psi_f$ such that $\lim _{x \downarrow 0} \psi_f(x)=0$ and 
$$
V(\rho\|\sigma) \leq \psi_f\left(D_f(\rho \| \sigma)\right),
$$
which implies \begin{align}
    \label{item:10.1} 
    \lim_{n \rightarrow \infty} D_f\left(\rho_n \| \sigma_n\right)=0 \Rightarrow \lim _{n \rightarrow \infty} V(\rho_n\|\sigma_n)=0.
\end{align}
\end{enumerate}
\section{Conclusion and Discussion}
The major contribution of this article is a framework for obtaining quantum versions of several  results available for classical $f$-divergences. This is illustrated by proving quantum versions of several important classical $f$-divergence inequalities. A comprehensive review of the use of Nussbaum-Szkoła distributions in the literature is provided in Section \ref{sec:review}. We believe that the review will be useful for researchers working on related areas. Further applications of our main theorem are provided in the subsequent articles \cite{androulakis-john-2022b, Androulakis-John-2023}. All our results work both in finite and infinite dimensional setting which is a strength of the methods adopted in this article. Hence these results are particularly useful in continuous variable quantum information theory as well. For instance, we use the results of this article in order to study the Petz-R\'enyi relative entropy of gaussian states in the follow-up article \cite{Androulakis-John-2023}. 
\appendix
\section*{Appendix}

\section{Classical \texorpdfstring{\ensuremath{f}}{}-divergence}\label{appendix:classical-divergences}
In this section, we recall a few facts about the $f$-divergences in the setting of classical probability. We refer to \cite{Liese-Vajda-2006} and the survey article \cite{Erven-Harremos-2014} for the following definitions and results which we state in this section. The results  from \cite{Erven-Harremos-2014} which we use in this article  are repeated here for the ease of the reader. If $\mu$ and $\nu$ are two positive measures on a measure space $(X,\mathcal{F})$, then $\nu$ is said to be absolutely continuous with respect to $
\mu$ and we write $\nu\ll \mu$, if for every $E\in \mathcal{F}$ such that $\mu (E) = 0$, then $\nu(E) = 0$.
\begin{defn}
 \label{defn:classical-f-divergence}\cite[equation 24]{Liese-Vajda-2006}. Let $P, Q$ be probability distributions on a measure space $(X, \mathcal{F})$. Let $\mu$ be any $\sigma$-finite measure such that $P\ll\mu$ and $Q\ll\mu$. Let $p$ and $q$  denote the Radon-Nikodym derivatives with respect to $\mu$, of $P$ and $Q$, respectively.
 Let $f:(0,\infty)\to\BR$ be a convex (or concave) function then the $f$-divergence $D_f(P||Q)$ is defined as 
 \begin{align}
    D_f(P||Q) = \int\limits_{X}qf\left(\frac{p}{q}\right)\dd\mu,
 \end{align}
 under the conventions in \eqref{eq:convex-f-notations-1} and \eqref{eq:convex-f-notations-2}. If $P\ll Q$ \begin{equation}
     \label{eq:f-div}
      D_f(P||Q) = \int\limits_{X}f\left(\frac{p}{q}\right)\dd Q,
 \end{equation} where a proof of this fact can found in \cite[Page 4398]{Liese-Vajda-2006}. For completeness, we will prove it in the case of discrete probability distributions in  Remark~\ref{defn:renyi-divergence-Kullback}.
\end{defn}
\begin{rmk}\label{rmk:f-div-1}
If $f(1) = 0$ then $D_f(P\|P)=0$ for all probability distributions $P$. If $f(1) = 0$ and $P\ll Q$ then  by Jensen's inequality it can be seen that \[D_f(P\|Q)\geq 0.\]
Furthermore, if $f\geq 0$, then $D_f(P\|Q)\geq 0$ without any additional assumptions. Thus if  $f(1) = 0$ and $f\geq 0$, then  \[\min_{Q}D_f(P\|Q) = D_f(P\|P)=0. \]
\end{rmk}
\begin{rmk}\label{defn:renyi-divergence-Kullback}
  We refer to \cite{Liese-Vajda-2006} for more details on the definition and the properties of general classical $f$-divergences. Here we describe briefly about the case when the measures are discrete.  Let $P$ and $Q$ be discrete probability measures defined on a countable set $\mathcal{I}$. Let $\mu$ be the counting measure on $\mathcal{I}$, then clearly $P\ll\mu$, $Q\ll\mu$, \begin{align*}
     \dv{P}{\mu}\,(i)=P(i), \quad \textnormal{and} \quad \dv{Q}{\mu}\,(i)=Q(i), \forall i\in \mathcal{I}
 \end{align*} where $\dv{P}{\mu}$ and $\dv{Q}{\mu}$ are the respective Radon-Nikodym derivatives. In this case, \begin{equation}\label{f-div-discrete}
     D_f(P||Q) =\sum_{i\in \mathcal{I}}Q(i)f\left(\frac{P(i)}{Q(i)}\right),
 \end{equation}
 under the conventions in \eqref{eq:convex-f-notations-1} and  \eqref{eq:convex-f-notations-2}. We will show now that the sum in the equation above is independent of rearrangements. Assume with out loss of generality that $f$ is a convex function. We will show in this case that the negative terms of the series above sum up to a finite number. This fact, once proved, ensures that the series is independent of rearrangements. Let \[\mathcal{I}_- = \left\{i\in \mathcal{I}: \, Q(i)f\left(\frac{P(i)}{Q(i)}\right)<0\right\}.\] Since $f$ is convex, there exists $a, b \in \BR$ such that \[f(x)\geq a+bx, \quad \forall x\in [0,\infty),\]
where $f(0)=\lim_{x\downarrow0}f(x)$.
 For $i \in \mathcal{I}_-$  such that $Q(i)=0$ we have by our conventions in 
 \eqref{eq:convex-f-notations-2} that 
\begin{align*}f\left(\frac{P(i)}{Q(i)}\right) Q(i) &= P(i) \lim _{s \rightarrow \infty} \frac{f(s)}{s} \\ 
&\geqslant
\quad P(i) \lim _{s \rightarrow \infty} \frac{a+bs}{s}\\
&=P(i) b \\
&\geqslant-\abs{a} Q(i)-\abs{b} P(i).
\end{align*}
For $i \in \mathcal{I}_-$ such that  $Q(i) \neq 0$ we have \begin{align*} f\left(\frac{P(i)}{Q(i)}\right) Q(i) &\geqslant\left(a+b \frac{P(i)}{Q(i)}\right) Q(i)\\
& =a Q(i)+b P(i)\\&\geqslant-|a| Q(i)-|b| P(i).\end{align*}
So \[\sum_{i \in \mathcal{I}_{-}} f\left(\frac{P(i)}{Q(i)}\right) Q(i) \geqslant -\sum_{i \in \mathcal{I}_-}(\abs{a} Q(i)+\abs{b} P(i)) \geqslant-|a|-|b|>-\infty.\]
Now because rearrangements are possible, we may compute an expression for $D_f(P\|Q)$ using the conventions given in \eqref{eq:convex-f-notations-1} and \eqref{eq:convex-f-notations-2} as follows: 
\begin{align*}
D_f(P \| Q)&=\sum_i Q(i) f\left(\frac{P(i)}{Q(i)}\right) \\
& =\sum_{\{i: P(i)Q(i)>0\} } Q(i) f\left(\frac{P(i)}{Q(i)}\right)+ \sum_{Q(i)\neq 0, P(i)=0}f(0)Q(i)+ \sum_{Q(i)= 0, P(i)\neq 0}P(i)f'(\infty) \\
& =\sum_{\{i: P(i)Q(i)>0\} } Q(i) f\left(\frac{P(i)}{Q(i)}\right)+f(0) Q(P=0)+f^{\prime}(\infty) P(Q=0).\numberthis \label{eq:df-pq-expression} 
\end{align*}
It may be noted that, if $P\ll Q$, then the last term above is not present and we obtain \eqref{eq:f-div}.
\end{rmk}
\begin{egs}\label{eg:classical-examples} Here we list a few important examples of $f$-divergences. While the definitions are general, for the purpose of displaying an expression for each these quantities in our context, we assume that $P$ and $Q$ are discrete as in Remark \ref{defn:renyi-divergence-Kullback} and use Equation~\eqref{eq:df-pq-expression}. 
\begin{enumerate}
    \item $f(t)=t \log t \Rightarrow D(P\|Q):= D_{f}(P\|Q)$, the \textbf{Kullback-Leibler Relative Entropy},
    \begin{align*}
    D(P\|Q)=   \begin{cases}
        \sum_{i\in \mathcal{I}} P(i)\log \frac{P(i)}{Q(i)}, & \text{if } P\ll Q;\\
        \infty, &\text{otherwise.}
    \end{cases} 
    \end{align*}
    \item For $\alpha\in (0,1)\cup (1,\infty)$, $f_\alpha(t)=t^\alpha$ we obtain   $D_{\alpha}(P||Q) := \frac{1}{\alpha-1} \log D_{f_\alpha}(P\|Q)$, the  \textbf{R\'enyi} $\bm{\alpha}$-\textbf{relative entropy},
    \begin{align*}
       D_{\alpha}(P||Q)  = \begin{cases}
           \frac{1}{\alpha-1} \log \sum_{i\in \mathcal{I}}P(i)^{\alpha}Q(i)^{1-\alpha}, &\text{if } \alpha<1, 
            \text{ or }
            P\ll Q;\\
            \infty, & \text{otherwise}.
        \end{cases}
    \end{align*}
    \item \label{item:hellinger} $f(t) = \frac{1}{2}(\sqrt{t}-1)^2 \Rightarrow \mathscr{H}^2(P||Q):= D_f(P||Q)$, the \textbf{squared Hellinger distance}, \begin{align*}
        \mathscr{H}^2(P||Q) = \frac{1}{2}\sum_{i\in \mathcal{I}}\left(\sqrt{P(i)}-\sqrt{Q(i)}\right)^2.
    \end{align*} 
    \item For $\alpha\in (0,1)\cup (1,\infty)$, $f_{\alpha}(t)=\frac{t^\alpha-1}{\alpha-1} 
$ we obtain   $ \mathscr{H}_{\alpha}(P||Q):=D_{f_\alpha}(P \| Q)$, the \textbf{Hellinger $\alpha$-divergence},
\[\mathscr{H}_{\alpha}(P||Q)=\begin{cases}
     \frac{1}{\alpha-1}\left( \left(\sum_{i}P(i)^{\alpha}Q(i)^{1-\alpha}\right)-1\right), & \text{if } \alpha <1 \text{ or } P\ll Q;\\
     \infty, &
     \text{otherwise.}
\end{cases}\]
\item  $f(t)=|t-1| \Rightarrow V(P\|Q):= D_f(P \| Q)$, the \textbf{Total Variation Distance},
\begin{align*}
  V(P\|Q) = \sum_{i\in \mathcal{I}}\abs{P(i)-Q(i)}.  
\end{align*}
\item  $f(t)=(t-1)^2 \Rightarrow \chi^2(P||Q):= D_f(P \| Q)$, the $\bm{\chi^2}$-\textbf{divergence},
\begin{align*}
    \chi^2(P||Q) = \begin{cases}
        \sum_{\{i\in \mathcal{I}|Q(i)>0\}} \frac{\left(P(i)-Q(i)\right)^2}{Q(i)},& \text{if } P\ll Q;\\
        \infty, &\text{otherwise.}
    \end{cases}
\end{align*}
\end{enumerate} 
\end{egs}

\section{Relative Modular Operator on \texorpdfstring{\ensuremath{\B{\CK }}}{}}\label{appendix:relative-modular}Let $\rho$ and $\sigma$ be two density operators  on a Hilbert space $\CK $. 
In this section, we analyse the relative modular operator $\Delta_{\rho,\sigma}$ and find its spectral decomposition in our setting. One can refer to  \cite{araki1977relative, petz-ohya-1993, luczak2019relative}  for studying it in the general von~Neumann algebra setting. Let $(\Bt{\CK }, \braket{\cdot}{\cdot}_2)$ denote the Hilbert space of Hilbert-Schmidt operators on $\CK $. Let $\Pi_{\sigma}$ denote the orthogonal projection onto the support of $\sigma$. Define   \[D(S)= \{X\sqrt{\sigma}\,:\,X\in \B{\CK }\}+ \{Y(I-\Pi_{\sigma})\,:\, Y\in \Bt{\CK}\}\subseteq \Bt{\CK },\]
which is a (vector space) direct sum of orthogonal linear manifolds of $\Bt{\CK}$. Then $D(S)$ is a dense linear submanifold.
Now define the  antilinear operator $S:D(S)\rightarrow \Bt{\CK}$ 
by \begin{align}\label{eq:antilinear-S}
    S\left(X\sqrt{\sigma} + Y(I-\Pi_{\sigma})\right) =\Pi_{\sigma}X^{\dagger}\sqrt{\rho}.
\end{align}
For the densely defined antilinear operator $S$, the adjoint $S^{\dagger}$ is defined on all $a\in \Bt{\CK}$ such that there exists a vector $S^{\dagger}a\in \Bt{\CK}$ satisfying \[\braket{a}{Sb} = \overline{\braket{S^{\dagger}a}{b}}, \quad \forall b\in D(S).\]
 Now let $X_2\in \B{\CK}$ and $Y_2\in \Bt{\CK}$ be arbitrary, for the antilinear operator $S$ defined by (\ref{eq:antilinear-S}), we have 
 \begin{align*}
     &\braket{\sqrt{\sigma}X_2+(I-\Pi_{\sigma})Y_2}{S\left(X_1\sqrt{\sigma} + Y_1(I-\Pi_{\sigma})\right)}_2\\
     &\phantom{.................................}=  \braket{\sqrt{\sigma}X_2+(I-\Pi_{\sigma})Y_2}{\Pi_{\sigma}X_1^{\dagger}\sqrt{\rho}}_2\\
     &\phantom{.................................}= \tr \left\{\left(X_2^{\dagger}\sqrt{\sigma}+Y_2^{\dagger}(I-\Pi_{\sigma})\right)\Pi_{\sigma}X_1^{\dagger}\sqrt{\rho}\right\}\\
     &\phantom{.................................}= \tr  X_2^{\dagger}\sqrt{\sigma}\Pi_{\sigma}X_1^{\dagger}\sqrt{\rho}\\
     &\phantom{.................................}= \tr X_2^{\dagger}\Pi_{\sigma}\sqrt{\sigma}X_1^{\dagger}\sqrt{\rho}\\
     &\phantom{.................................}= \tr \sqrt{\sigma}X_1^{\dagger}\sqrt{\rho} X_2^{\dagger}\Pi_{\sigma}\\
     &\phantom{.................................}= \tr\left\{ \left(\sqrt{\sigma}X_1^{\dagger}+(I-\Pi_\sigma)Y_1^{\dagger}\right)\sqrt{\rho} X_2^{\dagger}\Pi_{\sigma}\right\}\\
     &\phantom{.................................}=\braket{X_1\sqrt{\sigma}+Y_1(I-\Pi_\sigma)}{\sqrt{\rho} X_2^{\dagger}\Pi_{\sigma}}_2\\
      &\phantom{.................................}=\overline{\braket{\sqrt{\rho} X_2^{\dagger}\Pi_{\sigma}}{X_1\sqrt{\sigma}+Y_1(I-\Pi_\sigma)}_2}.
 \end{align*}
 Hence  $S^{\dagger}$ is defined of the dense set \[\left\{\sqrt{\sigma}X_2+(I-\Pi_{\sigma})Y_2\,:\, X_2\in \B{\CK}, Y_2\in \Bt{\CK} \right\},\]
and 
\begin{align}\label{eq:antilinear-S-dagger}
    S^{\dagger}\left(\sqrt{\sigma}X_2+(I-\Pi_{\sigma})Y_2\right) = \sqrt{\rho} X_2^{\dagger}\Pi_{\sigma}, \quad \forall X_2\in \B{\CK}, Y_2\in \Bt{\CK}.
\end{align}
Since $S$ and $S^{\dagger}$ are densely defined, $S$ is a closable operator.
The relative modular operator $\Delta_{\rho,\sigma}$ is defined as \begin{align}
    \label{eq:modular-operator}
    \Delta_{\rho,\sigma} =S^{\dagger}\overline{S}
\end{align}
where $\overline{S}$ denotes the closure of $S$. Furthermore, we have \begin{equation}
    \label{eq:dom-delta}
  \{  X\sigma+Y(I-\Pi_{\sigma})\,:\, X\in \B{\CK}, Y\in \Bt{\CK}\}\subseteq D(\Delta_{\rho,\sigma})
\end{equation} and
\begin{align*}\label{eq:modular-operator-action}
    \Delta_{\rho,\sigma} \left(X\sigma+Y(I-\Pi_{\sigma})\right) &= S^{\dagger}\overline{S}\left(X\sqrt{\sigma}\sqrt{\sigma}+Y(I-\Pi_{\sigma}\right)\\
    &= S^{\dagger}\left(\Pi_{\sigma}\sqrt{\sigma}X^{\dagger}\sqrt{\rho}\right)\\
     &= S^{\dagger}\left(\sqrt{\sigma}\Pi_{\sigma}X^{\dagger}\sqrt{\rho}\right)\\
    &= \sqrt{\rho}\left(\Pi_{\sigma}X^{\dagger}\sqrt{\rho}\right)^\dagger\Pi_{\sigma}\\
    &=\rho X\Pi_{\sigma}, \quad \forall X\in \B{\CK },  Y\in \Bt{\CK} \numberthis.
\end{align*}
\begin{prop}\label{prop:spectral-relative-modular}
Let $\rho$ and $\sigma$ be as in \eqref{eq:spectral-rho-sigma}. Then the spectral decomposition of the relative modular operator $\Delta_{\rho,\sigma}$ is given by
\begin{align}\label{eq:delta-spectral}
    \Delta_{\rho,\sigma} = \sum_{\{i,j\,:\, r_i\neq 0, s_j \neq 0\}} r_is_j^{-1}\ketbra{X_{ij}},
\end{align}
where \begin{align}
    \label{eq:X-ell-j-defn}
    X_{ij} = \ketbra{u_i}{v_j} \in \Bt{\CK},\quad \forall i,j\in \mathcal{I}.
\end{align}In particular, \[\ker\Delta_{\rho,\sigma} = \overline{\spn}\{X_{ij}\,:\,r_i=0 \textnormal{ or } s_j =0\}.\]
\end{prop}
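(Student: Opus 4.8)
The plan is to exhibit $\{X_{ij}\}_{i,j\in\mathcal{I}}$ as a complete orthonormal set of eigenvectors of $\Delta_{\rho,\sigma}$ carrying the prescribed eigenvalues, and then to invoke the self-adjointness of the relative modular operator to force it to coincide with the diagonal operator on the right-hand side of \eqref{eq:delta-spectral}. First I would record that $\{X_{ij}\}_{i,j\in\mathcal{I}}$ is an orthonormal basis of $\Bt{\CK}$: since $\braket{X_{ij}}{X_{kl}}_2=\tr(\ketbra{v_j}{u_i}\ketbra{u_k}{v_l})=\braket{u_i}{u_k}\braket{v_l}{v_j}=\delta_{ik}\delta_{jl}$, orthonormality is immediate, and completeness follows from $\{u_i\}_{i\in\mathcal{I}}$ and $\{v_j\}_{j\in\mathcal{I}}$ being orthonormal bases of $\CK$ as in \eqref{eq:spectral-rho-sigma}.

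Next I would verify that every $X_{ij}$ lies in $D(\Delta_{\rho,\sigma})$ and compute its image from the domain description \eqref{eq:dom-delta} and the action formula \eqref{eq:modular-operator-action}, splitting into cases according to whether $s_j$ vanishes. If $s_j\neq0$, then $\ketbra{u_i}{v_j}\sigma=s_j\ketbra{u_i}{v_j}$, so $X_{ij}=(s_j^{-1}\ketbra{u_i}{v_j})\sigma$ lies in the domain \eqref{eq:dom-delta} with $X=s_j^{-1}\ketbra{u_i}{v_j}$ and $Y=0$; then \eqref{eq:modular-operator-action}, together with $\ketbra{u_i}{v_j}\Pi_\sigma=\ketbra{u_i}{v_j}$ (valid because $\Pi_\sigma v_j=v_j$) and $\rho\ketbra{u_i}{v_j}=r_i\ketbra{u_i}{v_j}$, yields $\Delta_{\rho,\sigma}X_{ij}=r_is_j^{-1}X_{ij}$. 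If $s_j=0$, then $v_j\in\ker\sigma$, so $X_{ij}=\ketbra{u_i}{v_j}(I-\Pi_\sigma)$ lies in \eqref{eq:dom-delta} with $X=0$ and $Y=\ketbra{u_i}{v_j}\in\Bt{\CK}$, and \eqref{eq:modular-operator-action} gives $\Delta_{\rho,\sigma}X_{ij}=0$. Hence every $X_{ij}$ with $r_i=0$ or $s_j=0$ is an eigenvector with eigenvalue $0$, whereas those with $r_i\neq0$ and $s_j\neq0$ have eigenvalue $r_is_j^{-1}$.

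Finally I would upgrade this eigenbasis to the spectral statement. Let $T$ be the positive self-adjoint diagonal operator $\sum_{\{i,j:r_i\neq0,s_j\neq0\}}r_is_j^{-1}\ketbra{X_{ij}}$ on its natural maximal domain, for which the finite linear span $\mathcal{F}$ of $\{X_{ij}\}$ is a core. The operator $\Delta_{\rho,\sigma}=S^{\dagger}\overline{S}=\overline{S}^{\dagger}\overline{S}$ is positive and self-adjoint and, by the previous step, agrees with $T$ on $\mathcal{F}$; since $\Delta_{\rho,\sigma}$ is closed and $\mathcal{F}$ is a core for $T$, it follows that $T\subseteq\Delta_{\rho,\sigma}$, and as both are self-adjoint, taking adjoints gives the reverse inclusion, so $T=\Delta_{\rho,\sigma}$. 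This is precisely \eqref{eq:delta-spectral}, and the kernel assertion is then read off directly: $\ker\Delta_{\rho,\sigma}$ is the closed span of the eigenvectors with eigenvalue $0$, namely $\overline{\spn}\{X_{ij}:r_i=0\text{ or }s_j=0\}$.

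The step I expect to be the main obstacle is this last one: possessing an orthonormal basis of eigenvectors inside the domain does not, for a merely symmetric operator, force it to equal the associated diagonal operator, so the argument genuinely rests on the self-adjointness (hence maximality) of $\Delta_{\rho,\sigma}$ together with $\mathcal{F}$ being a core for $T$. A secondary point needing care is the bookkeeping of the conventions (treating $r_i=0$ and $s_j=0$ so that the vanishing and non-vanishing eigenvalue cases partition the index set) so that the eigenvalue computation matches the stated kernel exactly.
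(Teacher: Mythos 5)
Your proposal is correct and takes essentially the same route as the paper: both exhibit $\{X_{ij}\}_{i,j\in\mathcal{I}}$ as an orthonormal basis of $\Bt{\CK}$ contained in $D(\Delta_{\rho,\sigma})$ via \eqref{eq:dom-delta}, then compute $\Delta_{\rho,\sigma}X_{ij}=r_is_j^{-1}X_{ij}$ when $s_j\neq0$ and $\Delta_{\rho,\sigma}X_{ij}=0$ when $s_j=0$ from \eqref{eq:modular-operator-action}. Your only addition is to spell out the final core/self-adjointness argument identifying $\Delta_{\rho,\sigma}$ with the diagonal operator $T$, a step the paper leaves implicit after verifying the three items, relying on the already-stated fact that the relative modular operator is positive and self-adjoint.
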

\begin{proof}
Since $\{u_i\}$ and $\{v_j\}$ are orthonormal bases for $\CK $, it is easy to see that the double sequence \[\{X_{ij}\}_{i, j\in \mathcal I}\text{ is an orthonormal basis of } \Bt{\CK }. \]
To complete the proof, we will show the following: \begin{enumerate}
    \item \label{item:counter-eg-1} $X_{ij}\in D(\Delta_{\rho,\sigma})$ for all $i,j$;
    \item\label{item:counter-eg-3} $\Delta_{\rho,\sigma} (X_{i,j}) = r_is_j^{-1} X_{ij}$ for all $i,j$ such that  $s_j\neq 0$;
    \item \label{item:counter-eg-2} if $s_j =0$ then $X_{ij}\in \ker \Delta_{\rho,\sigma}$.
\end{enumerate}
To prove \ref{item:counter-eg-1}, note that for $j$ such that  $s_j\neq 0$, by \eqref{eq:dom-delta}, \begin{align}\label{eq:delta-spectral-proof-1}
    X_{ij} = \ketbra{u_i}{v_j} =s_j^{-1}\ketbra{u_i}{v_j}\left(\sum_{k}s_k\ketbra{v_k}\right)= s_j^{-1}X_{ij}\sigma\in D(\Delta_{\rho,\sigma}), \forall i.
\end{align}
Also, if $s_j=0$ then $v_j\in \ran(I-\Pi_{\sigma})$ and once again from \eqref{eq:dom-delta}, \begin{align}
    \label{eq:delta-spectral-proof-2}X_{ij} = \ketbra{u_i}{v_j}= \ketbra{u_i}{v_j}(I-\Pi_{\sigma})\in D(\Delta_{\rho,\sigma}), \forall i.
\end{align}
Now by \eqref{eq:modular-operator-action} and \eqref{eq:delta-spectral-proof-1}, for $j$ such that  $s_j\neq 0$ \[\Delta_{\rho,\sigma}X_{ij} = \Delta_{\rho,\sigma}(s_j^{-1}X_{ij}\sigma)=\rho (s_j^{-1}X_{ij})\Pi_{\sigma}=s_j^{-1}\left(\sum_{k}r_k\ketbra{u_k}\right)\ketbra{u_i}{v_j}\Pi_{\sigma}=r_is_j^{-1}X_{ij}, \]
which proves \ref{item:counter-eg-3}. Item \ref{item:counter-eg-2}  follows from \eqref{eq:modular-operator-action} and \eqref{eq:delta-spectral-proof-2}.
\end{proof}
 \begin{rmk}
 The spectral projections of  general modular operator with faithful states can be seen in \cite[equation 2.9]{Datta-Pautrat-Rouze-2016} for the finite dimensional case and \cite[Example 2.6]{Hiai-2018} for general dimensions. Our Proposition \ref{prop:spectral-relative-modular} provides the spectral decomposition of the relative modular operator even when the states are not faithful.
 \end{rmk}

\section*{Acknowledgements}
We thank Mark Wilde, Mil{\'a}n Mosonyi and Hemant Kumar Mishra for their comments on an earlier version \cite{Androulakis-John-2022a} of this article, and for the references that they brought to our attention. We also thank the anonymous referee for several constructive suggestions which
helped us to improve the article immensely.

The second author thanks The Fulbright Scholar Program and United States-India Educational Foundation for providing funding and other support to conduct this research through a Fulbright-Nehru Postdoctoral Fellowship (Award No. 2594/FNPDR/2020), he also acknowledges the United States Army Research Office MURI award on
Quantum Network Science, awarded under grant number W911NF2110325 for partially funding this research.

\bibliographystyle{IEEEtran}
\bibliography{bibliography}
\end{document}